\tikzset{
  invisible/.style={opacity=0},
  visible on/.style={alt={#1{}{invisible}}},
  alt/.code args={<#1>#2#3}{%
    \alt<#1>{\pgfkeysalso{#2}}{\pgfkeysalso{#3}}
  },
}
\newcommand{\pend}{P}				
\newcommand{\utt}{TT}				
\newcommand{\uft}{FT}				
\newcommand{\ft}{\diamond}			
\newcommand{\tdt}{\leftrightarrow}	
\newcommand{\tit}{\to}				
\newcommand{\tot}{\leftarrow}			
\newcommand{\ppv}{+}					
\newcommand{\pmv}{-}					
\newcommand{\un} {{\it un}}
\newcommand{\UDH}{\text{DH}}
\newcommand{\DDH} {\text{DDH}}
\newcommand{\dpws} {\text{d-pw}}
\newcommand{\dtws} {\text{d-tw}}
\newcommand{\dagws} {\text{dagw}}
\newcommand{\cyr} {\text{cr}}
\newcommand{\mso} {\text{MSO}}
\theoremstyle{definition}
\newtheorem{definition}{Definition}[section]
\theoremstyle{plain}
\newtheorem{theorem}[definition]{Theorem}
\newtheorem{lemma}[definition]{Lemma}
\newtheorem{corollary}[definition]{Corollary}
\newtheorem{proposition}[definition]{Proposition}
\theoremstyle{remark}
\newtheorem{example}[definition]{Example}
\begin{document}

\title{Twin-Distance-Hereditary Digraphs}

\author{Dominique Komander}
\author{Carolin Rehs}

\affil{Heinrich-Heine-University Duesseldorf,
Institute of Computer Science,
Algorithmics for Hard Problems Group,
40225 Duesseldorf, Germany}

\maketitle

\pagestyle{plain}

\begin{abstract}
We investigate structural and algorithmic advantages of a directed version of the well-researched class of distance-hereditary graphs.
Since the previously defined distance-hereditary digraphs do not permit a recursive structure, we define directed twin-distance-hereditary graphs, which can be constructed by several twin and pendant vertex operations analogously to undirected distance-hereditary graphs and which still preserves the distance hereditary property.
We give a characterization by forbidden induced subdigraphs and place the class in the hierarchy, comparing it to related classes.
We further show algorithmic advantages concerning directed width parameters, directed graph coloring and some other well-known digraph problems which are NP-hard in general, but computable in polynomial or even linear time on twin-distance-hereditary digraphs.
This includes computability of directed path-width and tree-width in linear time and the directed chromatic number in polynomial time.
From our result that directed twin-distance-hereditary graphs have directed clique-width at most $3$ it follows by Courcelle's theorem on directed clique-width that we can compute every graph problem describable in which is describable in monadic second-order logic on quantification over vertices and vertex sets as well as some further problems like Hamiltonian Path/Cycle in polynomial time.
\medskip \\
\noindent
{\bf Keywords:} directed distance-hereditary graphs, directed width parameters, directed clique-width
\end{abstract}


\section{Introduction}
Distance-hereditary graphs have been introduced by Howorka in 1977 \cite{How77}. They are exactly the graphs which are distance-hereditary for their connected induced subgraphs, which means that if any two vertices $u$ and $v$ belong to a connected induced subgraph $H$ of a graph $G$, then some shortest path between $u$ and $v$ in $G$ has to be a subgraph of $H$.
But this is not the only definition of distance-hereditary graphs. Most important from an algorithmic perspective are the definition by forbidden induced subgraphs and the recursive construction by twins and pendant vertices. That is, a distance-hereditary graph can be defined recursively from a single vertex by the following three operations:
\begin{enumerate}
\item Adding a pendant vertex, which is a vertex with only one edge to an existent vertex,
\item  adding a false twin, which is a vertex with the same neighborhood as an existent vertex and no edge to this vertex and
\item adding a true twin, which is a vertex with the same neighborhood as an existent vertex and an edge to this vertex.
\end{enumerate}
Attempting to define a directed version of distance-hereditary graphs, it is necessary to decide which of these definitions modified to a directed definition is most promising to give an useful digraph class. We consider this matter concerning some directed graph parameters.
In \cite{LR10}, the authors use a straightforward way in generalizing the property of distance-hereditary, i.e. the property that some shortest path has to be induced subgraph, on digraphs. But their graphs are limited to oriented digraphs,which are digraphs without bidirectional edges. For undirected distance-hereditary graphs, tree-width is computable in linear time \cite{BDK00}. Further, linear rank-width of distance-hereditary graphs is computable in polynomial time \cite{AKK17}. The clique-width of any distance-hereditary graph is at most $3$ \cite{GR99}, but path-width is hard even on bipartite distance-hereditary graphs \cite{KBMK93}.

In this paper we introduce a directed version of distance-hereditary graphs, which differs from the already known distance-hereditary digraphs from \cite{LR10}. We preserve the distance-hereditary property for our new class of directed twin-distance-hereditary graphs (twin-dh digraphs for short) but we expand it as we allow bidirectional edges. These twin-dh digraphs are generated by a directed pruning sequence, as in the undirected class, by using twins and pendant vertices. This structure is algoritmically useful for showing that twin-dh digraphs have bounded clique-width. After the definition by a directed pruning sequence we go on with an other characterization of this class. Distance-hereditary graphs can be characterized by forbidden induced subgraphs, so we provide as well such a characterization for twin-dh digraphs.

We show how to place the class in the hierarchy of related common directed graph classes and conclude that the class of twin-dh digraphs is a subclass of extended directed co-graphs, which allows to deduce some properties. Further, we take a closer look to show the connection between directed co-graphs and twin-dh digraphs and compare the class to the previously defined distance-hereditary digraphs from \cite{LR10}.

Moreover, we investigate directed width parameters on this graph class.
By showing that every strong component of a twin-dh digraphs is a directed co-graph, we can prove that there are linear time algorithms to compute directed path-width, directed tree-width, DAG-width, and cycle rank on twin-dh digraphs. This further reproves the equality of these parameters, which is already given by the fact that twin-dh digraphs are also extended directed co-graphs. It generalizes our results on directed co-graphs in \cite{GKR21a}. Furthermore, we present some properties which demonstrate the usability of the class. Showing that twin-dh digraphs have directed clique-width at most $3$, it follows that for every digraph problem expressible in monadic second order logic with quantification over vertices and vertex sets there exists an fpt-algorithm with respect to the parameter directed clique-width. Thus, we can get polynomial time solutions for several different problems. From the bounded directed clique-width we can also follow that we can solve problems like Directed Hamiltonian Path, Directed Hamiltonian Cycle, Directed Cut, and Regular Subdigraph in polynomial time, following from \cite{GWY16}. From our results in \cite{GKR21} we conclude that we also can solve the dichromatic number problem in polynomial time on this class.

\section{Preliminaries}
We use the notations of Bang-Jensen and Gutin \cite{BG09} for graphs and digraphs.
By the term digraph, we always mean directed graphs with neither multi-edges nor loops.
For a (directed) graph $G$, $V(G)$ is the vertex set and $E(G)$ the edge (arc) set of $G$.
By $\un(G)$ we define the underlying undirected graph of a digraph, which can be obtained by replacing every arc $(u,v)$ by $\{u,v\}$ and deleting multi-edges. For a subset $V'\subseteq V(G)$ of a digraph $G$ let $G[V']$ be the subgraph induced by the vertex set $V'$.
The \emph{neighborhood} of a vertex $v$ in an undirected graph $G$ is $N_G(v)=\{u \mid \{u,v\} \in E(G)\}$. If the graph is clear from the context, we simply say $N(v)$. For the directed neighborhood we distinguish between the incoming-neighbors $N^-(v)=\{u \mid (u,v)\in E(G)\}$ and the outgoing-neighbors $N^+(v)=\{u \mid (v,u)\in E(G)\}$.
For a digraph $G$, a \emph{strongly connected subdigraph} is an induced subdigraph $H$ of $G$ such that for all vertices $u,v \in V(H)$ there is a directed walk from $u$ to $v$ in $H$ and a directed walk from $v$ to $u$ in $H$.
A \emph{strong component} of $G$ is a maximal strongly connected subdigraph of $G$, i.e. a strong component $H$ of $G$ such that there is no vertex $v \in V(G) \setminus V(H)$ such that the induced subdigraph of $G$ generated by $V(H) \cup \{v\}$ is a strong component of $G$.
A digraph $G$ is \emph{weakly connected} if $\un(G)$ is connected.
A \emph{biorientation} of an undirected graph $G$ is obtained by replacing every edge $\{u,v\}\in E(G)$ by one or both of the arcs $(u,v)$ and $(v,u)$.
We call vertex $v$ in digraph $G$ a \emph{bioriented leaf} if there exist $(u,v),(v,u)\in E(G)$ and if $v$ is a leaf in $\un(G)$, which means that in $un(G)$ it holds that $\mid N_{un(G)}(v)\mid = 1$.

A \emph{ (directed) graph parameter} of a (directed) graph $G$ is a function $\alpha$ which maps from graph $G$ to an integer. Two graph parameters $\alpha$ and $\beta$ are called \emph{equivalent}, if there are some functions $f,g$ such that for every digraph $G$ it holds $\alpha(G)\leq f(\beta(G))$ and $\beta(G) \leq g(\alpha(G))$.

\subsection{Distance-Hereditary Graphs and Co-Graphs}

We now recall several different definitions of undirected distance-hereditary graphs \cite{BM86, Oum05a}.
We therefore need to define the terms of pendant vertices and twins. Let $G$ be an undirected, connected graph. A vertex $v \in V(G)$ is called \emph{pendant} if there is $u \in V(G)$ such that $\{u,v\} \in E(G)$ and for all other $w \in V(G)$, $w \neq u$ it holds that $\{w,v\} \not\in E(G)$.
A vertex $v \in V(G)$ is called \emph{twin} of $u \in V(G)$, if $N(u) \setminus \{v\} = N(v) \setminus \{u\}$.
It is called \emph{true twin} if $\{u,v\} \in E(G)$, otherwise \emph{false twin}.

The following conditions are equivalent:
\begin{enumerate}
\item $G$ is distance-hereditary, $G\in DH$ for short, that is, for every two vertices $u$ and $v$, all induced $u,v$-paths have the same length.
\item For every two vertices $u$ and $v$ that have distance $2$ to each other, there is no induced path between $u$ and $v$ of length greater than $2$.
\item{\label{lem:hhdg-free}} The house, holes, domino, and gem (see Fig. \ref{fig:forbidden_DH}) are not induced subgraphs of $G$.

\item $G$ can be defined recursively from a single vertex by adding twins and pendant vertices.
\end{enumerate}

We extend this recursive definition to digraphs and therefore, we recall the so-called pruning sequence to define a distance-hereditary graph $G$. Let $\sigma = (v_0, \dots v_{n-1})$ be an ordering on $V(G)$. Then $S=(s_1, \dots, s_{n-1})$ is the pruning sequence of $G$, where for every $1 \leq j \leq i \leq n-1$, $s_i$ is one of the following:
 $(x_i, \pend, x_j)$ for a $x_i$ a pendant vertex of $x_j$,  $(x_i, \utt, x_j)$ for a $x_i$ a true twin of $x_j$, $(x_i, \uft, x_j)$ for a $x_i$ a false twin of $x_j$.

Co-graphs have been introduced in the 1970s by a number of authors under different notations, e.g. in \cite{Ler71}.
A co-graph $G$ can be obtained starting with single vertices by applying the disjoint union operation, where we just add two vertex-disjoint graphs and the join operation, where we add all possible edges between two disjoint graphs.
An equivalent definition is that $G$ can be obtained from single vertices by taking disjoint union and adding twins.
Thus, it becomes obvious that co-graphs are a subclass of distance-hereditary graphs.

\subsection{Directed Co-Graphs}

Directed Co-Graphs has been considered a lot in literature. We here recall the most common definition:

\begin{definition}[Directed co-graphs, \cite{CP06}]
The class of {\em directed co-graphs} is recursively defined as follows.
\begin{enumerate}[(i)]
\item Every digraph on a single vertex $(\{v\},\emptyset)$,
denoted by $\bullet$, is a {\em directed co-graph}.

\item If $G_1$ and $G_2$ are vertex-disjoint directed co-graphs, then
\begin{enumerate}
\item
the disjoint union
$G_1\oplus G_2$, i.e. the digraph with vertex set $V(G_1)\cup  V(G_2)$ and
arc set $E(G_1)\cup  E(G_2)$,

\item
the series composition
$G_1 \otimes  G_2$, i.e. defined by their disjoint union plus all possible arcs between
vertices of $G_1$ and $G_2$, and
\item
the order composition
$G_1\oslash  G_2$, i.e. defined by their disjoint union plus all possible arcs from
vertices of $G_1$ to vertices of $G_2$, are {\em directed co-graphs}.
\end{enumerate}
\end{enumerate}
\end{definition}

As in \cite{GKR21a} we also consider the class of \emph{extended directed co-graphs}, that includes additionally to the operations above the so-called directed union transformation, which is the disjoint union of $G_1$ and $G_2$ plus possible additional arcs from $G_1$ to $G_2$. An expression for an (extended) directed co-graph is a series of these procedures. Nevertheless, these series are not unique representations of the resulting digraphs, since we have no information about which edges are inserted by the directed union.

For every directed co-graph $G$ the underlying undirected graph $\un(G)$ is a co-graph, but not every orientation of an undirected co-graph is a directed co-graph.
In \cite{CP06} it has been shown
that directed co-graphs can be characterized by eight forbidden induced
subdigraphs.

\section{Directed Distance-Hereditary Graphs}

We now come to define a directed version distance-hereditary graphs. A straight-forward idea given by the name of the graph class is, to say that a digraph $G$ is called distance-hereditary, if for every induced subdigraph $H$ of $G$ and for every vertices $u,v$ in $H$, the shortest path between $u$ and $v$ in $H$ has the same length as the shortest path between $u$ and $v$ in $G$. This idea has been pursued in \cite{LR10} but only for oriented digraphs without bioriented edges \cite{Schrader}.

In the following, we generalize the recursive definition by twins and pendant vertices to digraphs, which admits several algorithmic results.

There are at least three different definitions of twins in digraphs. In~\cite{KR09}, twins have been defined to obtain distance-hereditary digraphs in context of directed rank-width and split decomposition. Thus, \cite{KR09} can be seen as an attempt to extend undirected distance-hereditary graphs to directed distance-hereditary graphs.
In~\cite{FoucaudHP19}, twins have been defined to obtain results about domination and location-domination, and in~\cite{GutinY02} (see also~\cite[p. 282]{BG18}) in studying diameter in digraphs. In \cite{LR10} twins are introduced in context
of a distance based directed version of distance-hereditary graphs, but they do not lead to a characterization of this graph class.

We define directed twins and pendant vertices in digraphs as follows.

 \begin{definition}\label{def:twins}
 Let $G$ be a directed graph.
 \begin{itemize}
 \item Vertices $x,y \in V(G)$ are \emph{directed twins}\footnote{We say twins for short, but the meaning is directed twins if the context is a digraph.} if $N^{-}(x)\setminus\{y\}=N^{-}(y)\setminus\{x\}$ and $N^{+}(x)\setminus\{y\}=N^{+}(y)\setminus\{x\}$. We distinguish between
 \begin{itemize}
 	\item $x$ is a (directed) \emph{false twin} ($\ft$) of $y$, if $(x,y), (y,x) \not\in E(G)$.
  \item $x$ is a \emph{true out-twin} ($\tot$) of $y$ if $(y,x) \in E(G)$, $(x,y)  \not\in E(G)$.
  \item $x$ is a \emph{true in-twin} ($\tit$) of $y$ if $(x,y) \in E(G)$, $(y,x) \not\in E(G)$.
  \item $x$ is a \emph{bioriented true twin} ($\tdt$) of $y$ if $(x,y), (y,x) \in E(G)$.
 \end{itemize}
 \item A vertex $v \in V(G)$ is called \emph{pendant} if $|N^+(v)|+|N^-(v)|=1$. We distinguish between
 \begin{itemize}
  \item $v$ is a \emph{pendant plus} vertex ($\ppv$)  if $|N^+(v)|=1$ and $|N^-(v)|=0$.
  \item $v$ \emph{pendant minus} vertex ($\pmv$) if $|N^+(v)|=0$ and $|N^-(v)|=1$.
 \end{itemize}
 \end{itemize}
\end{definition}

This leads to the definition of a recursively defined graph class which is close to the definition for undirected distance-hereditary graphs. We denote this class of digraphs as directed twin-distance-hereditary graphs.

\begin{definition}[directed twin-distance-hereditary graphs]\label{def:didh}
A digraph $G$ is \emph{directed twin-distance-hereditary}, twin-dh or in $\DDH$ for short, if it can be constructed recursively by taking disjoint union, adding twins and pendant vertices, starting from a single vertex.

A \emph{directed pruning sequence} for $G$ is a sequence $S=(s_1, \dots, s_{n-1})$, where $\sigma= (v_0, \dots, v_{n-1})$ is an ordering of $V(G)$ and every $s_i$ is one of the following triples:
\begin{itemize}
\item $(v_i, \ppv, v_{a_i})$ if $v_i$ is a pendant plus vertex of $v_{a_i}$ in $G[\{v_0,\ldots , v_i\}]$
\item $(v_i, \pmv, v_{a_i})$ if $v_i$ is a pendant minus vertex of $v_{a_i}$ in $G[\{v_0,\ldots , v_i\}]$
\item $(v_i, \ft, v_{a_i})$ if $v_i$ is a false twin of $v_{a_i}$ in $G[\{v_0,\ldots , v_i\}]$
\item $(v_i, \tot, v_{a_i})$ if $v_i$ is a true out-twin of $v_{a_i}$ in $G[\{v_0,\ldots , v_i\}]$
\item $(v_i, \tit, v_{a_i})$ if $v_i$ is a true in-twin of $v_{a_i}$ in $G[\{v_0,\ldots , v_i\}]$
\item $(v_i, \tdt, v_{a_i})$ if $v_i$ is a bioriented true twin of $v_{a_i}$ in $G[\{v_0,\ldots , v_i\}]$
\end{itemize}

In general, we denote $s_i = (v_i, op_i, v_{a_i})$ and say for vertex $v_i$, that $op_i$ is the \emph{operation} and $v_{a_i}$ the \emph{anchor vertex} of $s_i$.
\end{definition}

Like in the undirected case, for a given twin-dh digraph, it is easy to get a directed pruning sequence.

\begin{proposition}
Let $G$ be a twin-distance-hereditary digraph. Then, a directed pruning sequence of $G$ can be computed in polynomial time.
\end{proposition}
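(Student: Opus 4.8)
The plan is to compute the sequence in reverse, by repeated pruning. Starting from $G$, I would maintain a current induced subdigraph $H$ (initially $H=G$) and, while $|V(H)|\geq 2$, locate a vertex $w$ that is \emph{prunable} in $H$, meaning that $w$ is a pendant (plus or minus) or a (false, true in, true out, or bioriented true) twin of some vertex $a$ in $H$. I would record the corresponding triple $(w,op,a)$, delete $w$ from $H$, and iterate; the vertex that survives becomes $v_0$. Reading the recorded triples in reverse removal order yields a candidate directed pruning sequence $S=(s_1,\dots,s_{n-1})$. Its correctness is immediate from the bookkeeping: if $w$ is the $t$-th deleted vertex then, at the moment of its deletion, $H$ equals the induced subdigraph on the not-yet-deleted vertices, which under the reversed indexing is exactly $G[\{v_0,\dots,v_i\}]$; since $w$ was prunable in $H$, the triple recorded for it is a legal operation of Definition~\ref{def:didh} at position $i$.

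Two facts make the loop go through. First, \emph{existence}: whenever $H$ is twin-dh and has at least two vertices, a prunable vertex exists. This requires no new work, since by Definition~\ref{def:didh} the digraph $H$ possesses some directed pruning sequence, and its last triple exhibits a vertex that is a twin or pendant of its anchor in all of $H$; an exhaustive search over the vertices of $H$ (for the two pendant conditions) and over ordered pairs (for the four twin types, comparing in- and out-neighborhoods and inspecting the arcs between the two vertices) is therefore guaranteed to report at least one candidate. Second, and this is the invariant I must preserve, after deleting the located prunable vertex the remaining digraph is again twin-dh, so that existence applies again in the next round.

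The main obstacle is precisely this invariant, i.e. that the class $\DDH$ is closed under deleting a vertex (equivalently, under taking induced subdigraphs). I would prove this as a separate lemma by induction on $|V(G)|$ along a pruning sequence whose last operation adds $w$ with anchor $a$. Deleting $w$ itself is trivial, and deleting a vertex $u\notin\{w,a\}$ is routine, since removing $u$ from both sides of the twin equalities, or from the single incidence of a pendant, leaves the last operation valid and lets the inductively obtained pruning sequence be extended by $(w,op,a)$. The delicate case is deleting the anchor $a$ of the final operation: if $w$ was a pendant of $a$, then $w$ becomes isolated and $G-a$ is a genuine disjoint union, which the pruning-sequence format captures only through the false-twin-of-an-isolated-vertex mechanism. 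I would therefore argue closure by showing that such disconnections are always reproducible by twin operations after seeding enough isolated vertices at the start of the construction; alternatively, and more cleanly, closure under induced subdigraphs is a free consequence of the forbidden-subdigraph characterization proved later in this paper, which I would invoke once it is available.

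Finally, the running time is polynomial. In a digraph on $m$ vertices, testing all vertices for the pendant condition costs $O(m^2)$, and testing all ordered pairs for the twin conditions costs $O(m^3)$ when in- and out-neighborhoods are compared via characteristic vectors, so one pruning round is $O(m^3)$. Since there are at most $n-1$ rounds with $m\le n$, the whole computation runs in $O(n^4)$ time, which suffices for the claim.
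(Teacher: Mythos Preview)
The paper does not give a proof of this proposition; it is stated and left without argument. Your repeated-pruning approach is the natural one and is correct, and the running-time analysis is fine.

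One remark on the ``main obstacle''. You only need that deleting a \emph{prunable} vertex keeps the digraph in $\DDH$, not full closure under induced subdigraphs. This weaker statement follows cheaply from what the paper proves immediately afterwards as Lemma~\ref{lem:closed} (closure under \emph{weakly connected} induced subdigraphs): if $w$ is a twin of $a$ in a weakly connected $H$, every path in $\un(H)$ through $w$ can be rerouted through $a$, so $H-w$ stays weakly connected; if $w$ is a pendant, $w$ is a leaf of $\un(H)$ and again $H-w$ is weakly connected. Hence Lemma~\ref{lem:closed} applies directly and $H-w\in\DDH$. Disconnected $H$ are handled componentwise. So your worry about the ``delicate case'' of deleting an anchor and producing a genuine disjoint union never arises in the algorithm; that case only shows up if one insists on proving full hereditarity, which you do not need. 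Your two proposed fixes for that case (seeding isolated vertices as initial false twins, or invoking the forbidden-subdigraph characterization) are both valid, but unnecessary for the proposition at hand.
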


\section{Properties of Twin-DH Digraphs}

The class of directed twin-distance-hereditary graphs is closed under the connected induced subgraph operation.

\begin{lemma}\label{lem:closed}
Let $G$ be a twin-dh digraph and let $H$ be a weakly connected induced subdigraph of $G$. Then $H$ is a twin-dh digraph.
\end{lemma}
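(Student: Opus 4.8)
The plan is to prove the statement by induction on $|V(G)|$, where the claim is that for every twin-dh digraph $G$ (not necessarily weakly connected) every weakly connected induced subdigraph is twin-dh. The base case $|V(G)|=1$ is immediate. For the inductive step, first observe that if $G$ is disconnected then, since every arc of $H$ is an arc of $G$ and $H$ is weakly connected, all of $V(H)$ lies in a single weakly connected component $C$ of $G$; as each component of a twin-dh digraph is again twin-dh (the only operation merging components in a construction of $G$ is the disjoint union, so restricting the construction to $V(C)$ builds $C$), the digraph $C$ is a twin-dh digraph with fewer vertices and the induction hypothesis applied to $C$ gives the claim. Hence we may assume that $G$ is weakly connected.

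For the remaining case let $v=v_{n-1}$ be the last vertex of a directed pruning sequence of $G$ and $v_a$ its anchor. Because $G$ is weakly connected with at least two vertices, $v$ is added as a twin or pendant of $v_a$, and $G':=G-v=G[\{v_0,\dots,v_{n-2}\}]$ is a twin-dh digraph on fewer vertices with $v_a\in V(G')$. I would then distinguish cases according to whether $v,v_a\in V(H)$. If $v\notin V(H)$, then $H$ is already a weakly connected induced subdigraph of $G'$ and the induction hypothesis applies. If $v\in V(H)$ is a pendant and $v_a\notin V(H)$, then $v$ has no neighbor in $H$, so weak connectivity forces $V(H)=\{v\}$, which is trivially in $\DDH$. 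If $v,v_a\in V(H)$, I would first verify that $v$ is still a twin (resp.\ pendant) of $v_a$ in $H$: intersecting the defining neighborhood equalities with $V(H)$ yields $N^-(v)\setminus\{v_a\}=N^-(v_a)\setminus\{v\}$ and its analogue for $N^+$ inside $H$, while the arcs between $v$ and $v_a$ survive since both endpoints lie in $H$, so the twin type is preserved. Deleting $v$ then leaves $H-v$ weakly connected — a pendant is a leaf of $\un(H)$ whose removal preserves connectivity, and a twin $v$ can be bypassed through its surviving twin $v_a$ — and $H-v$ is an induced subdigraph of $G'$, hence twin-dh by induction; re-appending $v$ with the same operation on $v_a$ shows $H\in\DDH$.

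The remaining and most delicate case is that $v$ is a twin of $v_a$ with $v\in V(H)$ but $v_a\notin V(H)$; here $v$ cannot simply be deleted and re-added, since the anchor $v_a$ is absent from $H$. Instead I would exploit the twin relation directly: the bijection fixing every vertex of $H$ except sending $v$ to $v_a$ is an isomorphism from $H$ onto $G[(V(H)\setminus\{v\})\cup\{v_a\}]$, because for every $x\neq v,v_a$ the defining equalities $N^-(v)\setminus\{v_a\}=N^-(v_a)\setminus\{v\}$ and $N^+(v)\setminus\{v_a\}=N^+(v_a)\setminus\{v\}$ force the arcs between $x$ and $v$ to coincide with those between $x$ and $v_a$. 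The image avoids $v$ and contains $v_a$, so it is a weakly connected induced subdigraph of $G'$, thus twin-dh by induction, and therefore so is the isomorphic $H$.

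I expect this twin-swap case to be the main obstacle, both in recognising that one should pass to the isomorphic copy inside $G'$ rather than trying to delete $v$, and in checking that the four twin types ($\ft$, $\tot$, $\tit$, $\tdt$) and both directed neighborhoods are handled correctly by the relabelling; the pendant subcases and the connectivity-preservation claims are then routine verifications.
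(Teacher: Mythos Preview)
Your argument is correct, and it takes a genuinely different route from the paper's. The paper fixes a directed pruning sequence of $G$ and shows, for an \emph{arbitrary} vertex $v$ to be deleted, how to rewrite the sequence into one for $G\setminus\{v\}$: if $v$ never occurs as an anchor one simply drops its triple; if it does, one locates the \emph{last} twin $v'$ of $v$ in the sequence, moves $v'$ into $v$'s slot, and replaces every later occurrence of $v$ as anchor by $v'$. Iterating over all vertices of $G\setminus H$ yields a pruning sequence for $H$. Your proof instead inducts on $|V(G)|$ and always peels off the \emph{last} vertex $v_{n-1}$ of the sequence, so that $G'=G-v_{n-1}$ is automatically twin-dh; the case analysis is then on whether $v_{n-1}$ and its anchor lie in $H$, and the only nontrivial case---$v_{n-1}\in V(H)$, anchor $v_a\notin V(H)$, twin operation---is dispatched by the neat observation that swapping $v_{n-1}$ for $v_a$ gives an isomorphic induced subdigraph sitting inside $G'$.

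What each buys: the paper's rewriting is more constructive (it literally outputs a pruning sequence for $H$) but the bookkeeping in its Case~3 is delicate and relies on the deleted vertex having at least one later twin, which in turn uses that $H$ stays weakly connected. Your isomorphism trick sidesteps all sequence surgery and makes the role of the twin hypothesis transparent; it is shorter and arguably cleaner, at the cost of being slightly less explicit about the resulting pruning sequence (one has to transport it back along the isomorphism). Both approaches ultimately exploit the same phenomenon---a twin can stand in for its partner---just at different levels: you use it semantically on the digraph, the paper uses it syntactically on the sequence.
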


\begin{proof}
  Let $G\in \DDH$ with $V(G)=\{v_0, \dots, v_{n-1} \}$ and let $S(G)=(s_1, \dots, s_{n-1})$ with $\sigma(G)= (v_0, \dots, v_{n-1})$ be a directed pruning sequence of $G$.
  Let $H = G \setminus \{v\}$ be the weakly connected induced subdigraph $H$ of $G$ which emerges when deleting vertex $v$ and all corresponding edges from $G$.
  We then create a directed pruning sequence $S(H)$ with ordering $\sigma(H)$ with the following procedures for the three different cases.
  \begin{enumerate}
    \item If $v=v_0$, we just delete $s_1$ from $S(G)$ to obtain $S(H)$ and adjust the indies, now $v_1$ is the first vertex in $\sigma(H)$.
    \item If there exists $(v,op_i,a_i)\in S(G)$ and no $(u_j,op_j,v)$ with $i<j$:\\
    (After generating $v$ in $S(G)$, $v$ never occurs as an anchor vertex.)\\
    In this case we get $S(H)$ by deleting $(v,op_i,a_i)$ from $S(G)$ and adjust the indices.
    \item If there exists $(v,op_i,a_i)\in S(G)$ and also $ (u_{j1},op_{j1},v),...,(u_{jk},op_{jk},v)\in S(G)$ with $i<j$ and $k,j \leq m-1$:\\
    (After generating $v$ in $S(G)$, $v$ occurs at least once as an anchor vertex.)\\
    Since the emerging digraph must be weakly connected, it holds that $op_{jk}$ must be a directed twin operation.
    We get a temporary $S'(H)=(s'_1, \ldots s'_{m-2})$ by the following steps.
    \begin{itemize}
      \item
      For $t=1, \ldots, i-1$ let $s'_t=s_t$.
      We keep the directed pruning sequence until $v$ is generated.
      \item
      For $t=i$ we set $s'_i=(v',op_i, a_i)$ where $v'$ is the vertex such that $s_h=(v',op_h,v)$ with $op_h$ is a directed twin operation and $\not\exists s_p=(v'',op_p, v)$ with $p>h$ and $op_h$ is a directed twin operation.
      Thus, $v'$ is the last twin of $v$ with respect to $S(G)$.
      Now we replace $v$ by $v'$ as an anchor in all following occurrences.
      As $v'$ is the latest twin of $v$ w.r.t. $S(G)$, every operation applied on $v$ is also applied on $v'$.
      \item
      For $t=i+1,\ldots,\ell$ with $i+1 \leq \ell \leq m-1$ and $v=a_t$ first check if $v'=u_t$ in $s_t=(u_t,op_t,a_t)$.
      If this situation arrives, we delete this $s_t$ from our pruning sequence, such that we set $s'_t=(,,)$.
      Vertex $v'$ is now generated earlier in the directed pruning sequence and we do not need this step anymore.
      We will delete this empty triple at the very end, such that we don't have counting issues in the following procedure.
      As long as $v'\neq u_t$ we set $s'_t=s_t$ if $v\neq a_t$ and we set $s'_t=(u_t,op_t, v')$ if $v=a_t$ for $s_t=(u_t,op_t,a_t)$.
      \item For the remaining $t=\ell + 1 , \ldots, m-1$ we set $s'_t=s_t$.
    \end{itemize}
    At the end of this procedure, we delete the empty entry $s_c=(,,)$ from $S'(H)$, adjust the indices and get a directed pruning sequence $S(H)$ for $H$.
  \end{enumerate}
  This holds for every weakly connected subdigraph $H$, since we can repeat this procedure for every vertex which is in $G$ but not in $H$.
  Thus, we can always get a directed pruning sequence $S(H)$ and $H$ is a twin-dh digraph.
\end{proof}

As every directed pruning sequence can easily be transformed into a pruning sequence, the relation to undirected distance-hereditary graphs follows immediately.

\begin{proposition}
  If $G$ is a twin-dh digraph, then $un(G)$ is distance-hereditary.
\end{proposition}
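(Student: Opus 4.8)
The plan is to show that each directed twin/pendant operation on $G$ projects down to a corresponding undirected twin/pendant operation on $\un(G)$, so that a directed pruning sequence for $G$ yields a (possibly shortened) pruning sequence for $\un(G)$ in the sense of the undirected recursive definition. Since by characterization~(4) in the equivalences list a graph is distance-hereditary iff it admits such a pruning sequence, this would immediately give $\un(G) \in DH$.

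\medskip

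First I would fix a directed pruning sequence $S=(s_1,\dots,s_{n-1})$ for $G$ with ordering $\sigma=(v_0,\dots,v_{n-1})$, guaranteed to exist by Definition~\ref{def:didh}. I would then analyze the effect of each operation type on the underlying undirected graph. The key observation is that the directed twin conditions $N^-(x)\setminus\{y\}=N^-(y)\setminus\{x\}$ and $N^+(x)\setminus\{y\}=N^+(y)\setminus\{x\}$ imply the undirected twin condition $N_{\un}(x)\setminus\{y\}=N_{\un}(y)\setminus\{x\}$, since the undirected neighborhood is the union of the in- and out-neighborhoods. Moreover, a pendant plus vertex ($\ppv$) or pendant minus vertex ($\pmv$) of $v_{a_i}$ has $|N^+(v_i)|+|N^-(v_i)|=1$, which forces $|N_{\un}(v_i)|=1$, i.e. $v_i$ is an undirected pendant vertex of $v_{a_i}$. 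Thus each directed operation maps to an undirected one as follows: each of $\tot,\tit,\tdt$ becomes an undirected \emph{true twin} (because in all three cases at least one of $(x,y),(y,x)$ is present, so $\{x,y\}\in E(\un(G))$); the directed \emph{false twin} $\ft$ becomes an undirected \emph{false twin} (since neither arc is present); and $\ppv,\pmv$ become undirected pendant-vertex additions.

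\medskip

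Having established this operation-by-operation correspondence, I would argue inductively along the sequence that $\un(G[\{v_0,\dots,v_i\}])$ is built from a single vertex by undirected twin and pendant operations, where at step $i$ the undirected status of $v_i$ relative to $v_{a_i}$ in $\un(G[\{v_0,\dots,v_i\}])$ is exactly the projected operation described above. The induction hypothesis is simply that the restriction of the projected sequence to the first $i$ steps is a valid undirected pruning sequence for $\un(G[\{v_0,\dots,v_i\}])$; the inductive step is the local verification just sketched, which only uses that passing to $\un(\cdot)$ commutes with taking induced subgraphs. Concatenating these steps yields a full undirected pruning sequence, so $\un(G)$ is distance-hereditary by equivalence~(4).

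\medskip

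\textbf{The main obstacle} I anticipate is purely bookkeeping rather than conceptual: I must confirm that the twin/pendant \emph{property} is preserved under the projection \emph{at the stage it is applied}, i.e. that $v_i$ really is an undirected twin (resp. pendant) of $v_{a_i}$ in the induced undirected subgraph $\un(G[\{v_0,\dots,v_i\}])$, not merely in $\un(G)$. This is immediate because $\un(G[\{v_0,\dots,v_i\}]) = \un(G)[\{v_0,\dots,v_i\}]$, so the directed twin/pendant condition holding in $G[\{v_0,\dots,v_i\}]$ transfers verbatim. A minor subtlety worth a sentence is that two \emph{distinct} directed operations can project to the \emph{same} undirected operation (the three true-twin variants all collapse to one), but this causes no problem since we only need \emph{existence} of a valid undirected pruning sequence, and the disjoint-union steps implicit in building $G$ project directly to undirected disjoint union as well.
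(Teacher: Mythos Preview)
Your proposal is correct and follows exactly the approach the paper has in mind: the paper simply states that ``every directed pruning sequence can easily be transformed into a pruning sequence'' and leaves the details implicit, while you have carefully spelled out that each of the six directed operations projects to one of the three undirected operations (with $\tot,\tit,\tdt$ all collapsing to the undirected true twin, $\ft$ to the false twin, and $\ppv,\pmv$ to the pendant vertex). Your handling of the bookkeeping subtlety via $\un(G[\{v_0,\dots,v_i\}]) = \un(G)[\{v_0,\dots,v_i\}]$ is exactly what is needed to make the induction go through, so there is nothing to add.
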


\subsection{Sub- and Superclasses of Twin-DH Digraphs}

In the undirected case, distance-hereditary graphs can be classified into the hierarchy with other graph classes. Especially, they are a superclass of co-graphs by the definition of co-graphs using twins. We now show, that this is also possible in the directed case.

\begin{proposition}\label{pro:twins}
Every directed co-graph with at least two vertices has directed twins.
\end{proposition}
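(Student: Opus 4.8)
The plan is to argue by induction on the recursive construction of directed co-graphs. Any directed co-graph $G$ with at least two vertices arises as $G = G_1 \circ G_2$ for some operation $\circ \in \{\oplus, \otimes, \oslash\}$ and directed co-graphs $G_1, G_2$, each having at least one vertex; this is immediate from the recursive definition by taking the outermost operation of an expression for $G$. I would treat the case where both $G_1$ and $G_2$ are single vertices as the base case, and the case where at least one factor has two or more vertices as the inductive step.

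For the base case, $G$ has exactly two vertices $a, b$ and is one of $\{a\} \oplus \{b\}$, $\{a\} \otimes \{b\}$, or $\{a\} \oslash \{b\}$. In each case I would check directly that $a$ and $b$ are directed twins in the sense of Definition~\ref{def:twins}: under $\oplus$ there are no arcs, so all four neighborhoods are empty and $a,b$ are false twins ($\ft$); under $\otimes$ we have $N^-(a)=N^+(a)=\{b\}$ and symmetrically for $b$, so after removing $\{a,b\}$ the relevant neighborhoods coincide and $a,b$ are bioriented true twins ($\tdt$); under $\oslash$ the only arc is $(a,b)$, and again $N^-(a)\setminus\{b\} = N^-(b)\setminus\{a\} = \emptyset$ and $N^+(a)\setminus\{b\} = N^+(b)\setminus\{a\} = \emptyset$, so $a,b$ are true in-twins ($\tit$). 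Since the two-vertex co-graph is always one of these three, the base case holds.

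For the inductive step, suppose without loss of generality that $G_1$ has at least two vertices. By the inductive hypothesis $G_1$ contains directed twins $x,y$, and the heart of the argument is to show that they remain directed twins in $G = G_1 \circ G_2$. The key observation is that each composition inserts arcs between $V(G_1)$ and $V(G_2)$ uniformly: every vertex $z \in V(G_2)$ bears exactly the same arc-relationship to every vertex of $G_1$ --- none under $\oplus$, both arcs under $\otimes$, and $z \in N^+(u)$ for all $u \in V(G_1)$ under $\oslash$. Consequently the contribution of $V(G_2)$ to $N^-(x)$ equals its contribution to $N^-(y)$ in $G$, and likewise for $N^+$. Combining this with the twin property of $x,y$ inside $G_1$ --- which controls the $V(G_1)$-part of the neighborhoods and is unaffected by the arcs newly added to $V(G_2)$ --- I would conclude $N^-(x)\setminus\{y\} = N^-(y)\setminus\{x\}$ and $N^+(x)\setminus\{y\} = N^+(y)\setminus\{x\}$ in $G$, so $x,y$ are directed twins in $G$.

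I do not expect a genuinely hard step here; the only place requiring care is the neighborhood bookkeeping in the inductive step, where one must verify the claim separately for $N^-$ and $N^+$ and check that the elements $x,y$ themselves are correctly excluded by the set differences. One should also note that the decomposition $G = G_1 \circ G_2$ need not be unique, but the existence of any such decomposition suffices, so no additional argument about canonical forms is needed.
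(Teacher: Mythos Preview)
Your proof is correct and follows essentially the same approach as the paper's: induction on the recursive structure, with the base case being a two-vertex digraph and the inductive step using that each composition $\oplus$, $\otimes$, $\oslash$ acts uniformly on all vertices of one side with respect to the other, so twins in a factor remain twins in the whole. Your write-up is simply more explicit in verifying the base case and the neighborhood bookkeeping; the only minor remark is that ``without loss of generality $G_1$ has at least two vertices'' is not a literal symmetry for $\oslash$, but the argument for $G_2$ is identical in spirit and the paper handles it the same way.
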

\begin{proof} Let $G$ be a directed co-graph with at least two vertices.
If $G$ has exactly two vertices, then these are twins.
So, let $G$ have more than two vertices.
Then $G=G_1\star G_2$ for some directed co-graphs $G_1$ and $G_2$ with $|V(G_1)|\ge 2$ or $|V(G_2)|\ge 2$, where $\star\in\{\oplus, \oslash, \otimes\}$.
By induction, $G_1$ or $G_2$ has twins $x,y$.
Now, by definition of the $\star$-operation, $x$ and $y$ are also twins in $G$.
Thus, every directed co-graph with at least two vertices has a twins as claimed.
\end{proof}

\begin{theorem}\label{thm:1}
A digraph is a directed co-graph if and only if it can be constructed recursively by taking disjoint union and adding directed twins, starting from a single vertex.
\end{theorem}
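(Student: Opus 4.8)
The plan is to prove the two inclusions separately: that every directed co-graph is constructible from single vertices by disjoint union and twin additions, and conversely that every digraph so constructible is a directed co-graph.

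For the forward direction I would argue by induction on $|V(G)|$. A single vertex is constructible by definition, giving the base case. For $|V(G)|\ge 2$, Proposition~\ref{pro:twins} guarantees a pair of directed twins $x,y$ in $G$. Deleting $x$ yields the induced subdigraph $G-x$, which is again a directed co-graph since the class is hereditary (removing a vertex from $G_1\star G_2$ leaves $(G_1-x)\star G_2$ or $G_1\star(G_2-x)$, and one may alternatively invoke the forbidden-subdigraph characterization of~\cite{CP06}). By the induction hypothesis $G-x$ is constructible, and re-adding $x$ as a twin of $y$ of the appropriate type recovers $G$: the twin relation forces $N^\pm_{G}(x)\setminus\{y\}=N^\pm_{G}(y)\setminus\{x\}=N^\pm_{G-x}(y)$, so the newly added vertex receives exactly the arcs it has in $G$. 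Hence $G$ is constructible. (Disjoint union is an admissible operation here but is not actually needed, since twin-peeling never gets stuck.)

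For the backward direction I would induct on the construction sequence. The single vertex is a directed co-graph, and $G_1\oplus G_2$ is a directed co-graph whenever $G_1,G_2$ are, directly by the definition. The remaining case---that adding a directed twin to a directed co-graph $H$ again yields a directed co-graph---is the heart of the argument, and I would establish it by a separate structural induction on a directed co-graph expression of $H$. In the base case $H=\{y\}$ is a single vertex, and the two-vertex digraph obtained by adding $x$ as a twin of $y$ realises each twin type as an elementary composition: a false twin ($\ft$) as $\{x\}\oplus\{y\}$, a bioriented true twin ($\tdt$) as $\{x\}\otimes\{y\}$, a true out-twin ($\tot$) as $\{y\}\oslash\{x\}$, and a true in-twin ($\tit$) as $\{x\}\oslash\{y\}$. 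For the inductive step, write $H=H_1\star H_2$ with $\star\in\{\oplus,\otimes,\oslash\}$ and assume without loss of generality that the anchor $y$ lies in $H_1$; by the inner induction $G_1:=H_1+x$ is a directed co-graph, and I would verify that $G=G_1\star H_2$.

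The step I expect to be the main obstacle is precisely this last verification, i.e.\ that pushing the twin addition inside the factor containing the anchor commutes with the composition $\star$. The crucial observation is that, because $x$ is a twin of $y$, its arcs to and from every vertex $z\in V(H_2)$ coincide with those of $y$ (as $z\notin\{x,y\}$), while the arcs between $y$ and $V(H_2)$ are exactly the uniform set of arcs dictated by $\star$ ($\oplus$: none; $\otimes$: both directions; $\oslash$: from $H_1$ to $H_2$). Thus re-applying $\star$ between $G_1$ and $H_2$ reinserts precisely the arcs incident to $x$ across the cut, while the internal arcs of $G_1$ already agree with those of $G$; hence $G=G_1\star H_2$ is a directed co-graph. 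Care is needed to check all four twin types against all three compositions and the symmetric sub-case $y\in V(H_2)$, but no genuinely new difficulty arises there.
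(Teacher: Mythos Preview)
Your proof is correct and follows essentially the same approach as the paper: both directions hinge on Proposition~\ref{pro:twins} for the forward inclusion and on the observation that a twin of $y\in V(H_1)$ can be pushed into the $H_1$-factor of a decomposition $H=H_1\star H_2$ for the backward inclusion. The only organizational difference is that you isolate ``twin-addition preserves co-graphness'' as its own structural induction on the expression tree, whereas the paper folds this into a single induction on $|V(G)|$ (using the induction hypothesis on the smaller graph $G_1=G[V(G_1')\cup\{y\}]$ to conclude it is a co-graph); the underlying verification is identical.
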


\begin{proof}
Note that we may assume that all graphs considered have at least two vertices. Otherwise, the theorem clearly holds.

First, let $G$ be a directed co-graph. Then, by Proposition~\ref{pro:twins}, $G$ has twins $x$ and~$y$. Let $G'=G-y$.
Since $G'$ is again a directed co-graph, by induction, $G'$ can be constructed by taking disjoint union and adding twins, starting from single vertices.
Since $G$ is obtained from $G'$ by adding twin $y$ to $x$, $G$ therefore can be constructed by taking disjoint union and adding twins, starting from single vertices, too.

For the other direction, suppose that $G$ can be constructed by taking disjoint union and adding twins, starting from single vertices.
We see by induction that $G$ is a directed co-graph.
Now, if $G$ is disconnected, then, as every component of $G$ is a directed co-graph, $G$ is a directed co-graph.
So, let us assume that $G$ is connected. As every digraph with at most two vertices is a directed co-graph, we may also assume that $G$ has more than two vertices. Now, by our assumption, $G$ has twins $x$ and $y$ so that $y$ is the last vertex adding to $G-y$ in obtaining $G$.
Let $G'=G-y$. Since $G'$ can be constructed by taking disjoint union and adding twins, $G'$ is a directed co-graph by induction.
Since $G'$ is connected and has at least two vertices, $G'=G_1'\star G_2'$ for some directed co-graphs $G_1'$ and $G_2'$, where $\star\in\{\oslash,\otimes\}$. Let $x\in G_1'$, say.
Write $G_1=G[V(G_1')\cup\{y\}]$ and $G_2=G_2'$, and note that $G_1$ and $G_2$ are directed co-graphs.

Then, since $x,y$ are twins in $G$, $G=G_1\star G_2$. Hence $G$ is a directed co-graph, and the proof of Theorem~\ref{thm:1} is complete.

\end{proof}

Then, the relation to twin-dh digraphs follows immediately:

\begin{corollary}
Let $G$ be a directed co-graph. Then, $G$ is also twin-distance-hereditary.
\end{corollary}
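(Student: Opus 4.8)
The plan is to invoke Theorem~\ref{thm:1} directly and observe a containment of operation sets. By that theorem, a digraph is a directed co-graph exactly when it admits a recursive construction from single vertices using only two operations: disjoint union and adding directed twins. On the other hand, Definition~\ref{def:didh} permits a twin-dh digraph to be built from single vertices using disjoint union, adding twins, \emph{and} adding pendant vertices. Since the operation set used to generate directed co-graphs is a subset of the operation set allowed for twin-dh digraphs, any generating sequence witnessing that $G$ is a directed co-graph is simultaneously a valid generating sequence witnessing that $G$ is twin-dh---namely one that simply never uses a pendant operation.

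Concretely, I would take a directed co-graph $G$, apply Theorem~\ref{thm:1} to obtain a recursive construction of $G$ using only disjoint union and the directed twin operations, and then note that each such step is literally one of the admissible steps in Definition~\ref{def:didh}. Reading the very same construction through the lens of Definition~\ref{def:didh} therefore exhibits a directed pruning sequence for $G$ (in which the pendant operations $\ppv$ and $\pmv$ never appear), so $G$ is twin-distance-hereditary.

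There is essentially no obstacle here: the statement is an immediate corollary of Theorem~\ref{thm:1} together with the fact that the directed co-graph operations form a subclass of the twin-dh operations. The only point requiring a moment of care is to confirm that the notion of \emph{directed twin} appearing in Theorem~\ref{thm:1} coincides with the four twin operations ($\ft$, $\tot$, $\tit$, $\tdt$) listed in Definition~\ref{def:didh}. Since both ultimately refer back to Definition~\ref{def:twins}, they agree, and the argument goes through verbatim.
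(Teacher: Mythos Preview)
Your proposal is correct and matches the paper's approach exactly: the paper states that the corollary ``follows immediately'' from Theorem~\ref{thm:1}, and your argument spells out precisely this immediate consequence, namely that the operation set for directed co-graphs (disjoint union plus directed twins) is contained in the operation set of Definition~\ref{def:didh}.
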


\subsubsection{Strong components in Twin-DH digraphs}

By Lemma \ref{lem:closed} and Theorem \ref{thm:1}, we can further conclude the following result:

\begin{lemma}\label{lem:twin_dh_co}
Let $G$ be a twin-dh digraph. Then every strong component of $G$ is a directed co-graph.
\end{lemma}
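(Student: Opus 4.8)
The plan is to reduce the statement to the characterization of directed co-graphs in Theorem~\ref{thm:1} and then argue by induction on the number of vertices. Let $H$ be a strong component of $G$. Being strongly connected, $H$ is in particular weakly connected, and since it is an induced subdigraph of $G$, Lemma~\ref{lem:closed} tells us that $H$ is itself a twin-dh digraph; hence $H$ admits a directed pruning sequence. It therefore suffices to prove the following self-contained claim: every strongly connected twin-dh digraph is a directed co-graph. By Theorem~\ref{thm:1}, a digraph is a directed co-graph precisely when it can be built from single vertices using only disjoint union and twin operations, so the essential point will be to show that a strongly connected twin-dh digraph can be assembled without ever using a pendant operation.

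I would proceed by induction on $|V(H)|$, the base case $|V(H)|=1$ being immediate since a single vertex is a directed co-graph. For the inductive step, fix a directed pruning sequence $S=(s_1,\dots,s_{n-1})$ of $H$ and examine its last triple $s_{n-1}=(v_{n-1},op_{n-1},v_{a_{n-1}})$. Since $v_{n-1}$ is the last vertex introduced, it never serves as an anchor afterwards, so its in- and out-neighborhoods in $H$ are exactly those produced by $s_{n-1}$. If $op_{n-1}$ were a pendant operation, then $v_{n-1}$ would satisfy $|N^-(v_{n-1})|=0$ (pendant plus) or $|N^+(v_{n-1})|=0$ (pendant minus) in $H$; but in a strongly connected digraph on at least two vertices every vertex has both positive in-degree and positive out-degree, a contradiction. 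Hence $op_{n-1}$ is one of the four twin operations, and $v_{n-1}$ is a twin of $v_{a_{n-1}}$ in $H$.

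The heart of the argument, and the step I expect to require the most care, is to verify that $H' := H - v_{n-1}$ remains strongly connected. Note first that $v_{a_{n-1}}$ survives in $H'$ because $a_{n-1}<n-1$. Given any two vertices $u,w \in V(H')$, I would take a directed $u$--$w$ path in $H$ and, whenever it passes through $v_{n-1}$, reroute it through the twin $v_{a_{n-1}}$; this is legitimate because $v_{n-1}$ and $v_{a_{n-1}}$ share the same in- and out-neighbors apart from each other, and the short detours involving the possible arc between the two twins collapse directly. This yields a $u$--$w$ path inside $H'$, so $H'$ is strongly connected. Moreover $H'=H[\{v_0,\dots,v_{n-2}\}]$ is generated by the prefix $(s_1,\dots,s_{n-2})$ of $S$ and is therefore a twin-dh digraph, so the induction hypothesis applies and $H'$ is a directed co-graph. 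Since $H$ arises from $H'$ by adding the twin $v_{n-1}$ of $v_{a_{n-1}}$, Theorem~\ref{thm:1} shows that $H$ is a directed co-graph as well, completing the induction and the proof.
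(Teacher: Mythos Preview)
Your proof is correct and uses the same high-level ingredients as the paper's---Lemma~\ref{lem:closed} to get a pruning sequence for $H$ and Theorem~\ref{thm:1} to conclude. The execution differs: the paper argues in one stroke that \emph{no} pendant operation can appear anywhere in the pruning sequence of $H$, observing that if $v_i$ is inserted as a pendant plus (resp.\ minus) vertex of $v_{a_i}$, then no directed path from $v_{a_i}$ to $v_i$ (resp.\ from $v_i$ to $v_{a_i}$) can ever be created by subsequent twin or pendant operations, contradicting strong connectivity. Your inductive approach only has to rule out a pendant at the \emph{last} step---which is immediate from the degree condition---but in exchange must verify that deleting the final twin preserves strong connectivity; this extra step is routine, as you note, via rerouting through the surviving twin. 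So the paper's route is shorter but leans on a global reachability claim it does not spell out, while yours trades that claim for an explicit and more elementary induction.
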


\begin{proof}
Let $H$ be an induced subdigraph of $G$ that is strongly connected. Then, by Lemma \ref{lem:closed}, $H$ is a twin-dh digraph. Thus, there is a directed pruning sequence $S(H)$, that creates $H$.
Assume that there is an element $s_i=(v_i, op_i, v_{a_i})$ in $S$ with operation $op_i$ is a pendant plus (respectively pendant minus) operation. Then, by the allowed operations in twin-dh digraphs, there is no directed path from $v_{a_i}$ to $v_i$ (respectively from $v_i$ to $v_{a_i}$) in $H$. This is a contradiction to the fact, that $H$ is strongly connected.
Thus, $S$ does not contain any pendant vertex operations. By Theorem \ref{thm:1} follows, that $H$ is a directed co-graph.
\end{proof}

This lemma admits many algorithmic results. Every digraph problem, which is solvable by considering only the strong components and which is further computable on directed co-graphs, is similarly computable on twin-dh digraphs by Lemma \ref{lem:twin_dh_co}. For example, this holds for several directed graph parameters, as we see later on.

With these results it also possible to show that twin-dh digraphs are a subclass of extended directed co-graphs:

\begin{proposition}\label{cor_extended}
Let $G$ be a twin-dh digraph. Then $G$ is also an extended directed co-graph.
\end{proposition}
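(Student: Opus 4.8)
The plan is to prove the statement by induction on the number of vertices of $G$, exploiting the condensation of $G$ into its strong components together with the directed union operation available for extended directed co-graphs. The base case of a single vertex is immediate. For the inductive step I would distinguish three cases according to the connectivity of $G$.

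First, if $G$ is not weakly connected, I would write $G = G_1 \oplus \cdots \oplus G_m$ as the disjoint union of its weak components. Each $G_i$ is a weakly connected induced subdigraph of $G$ and hence a twin-dh digraph by Lemma~\ref{lem:closed}; since each $G_i$ has fewer vertices than $G$, the induction hypothesis makes each $G_i$ an extended directed co-graph, and as disjoint union is one of the defining operations, $G$ is an extended directed co-graph as well. Second, if $G$ is strongly connected, then it consists of a single strong component and is therefore a directed co-graph by Lemma~\ref{lem:twin_dh_co}, hence in particular an extended directed co-graph.

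The main case, and the one requiring the key idea, is when $G$ is weakly connected but has more than one strong component. Here I would consider the condensation of $G$, which is a DAG on the strong components, and pick a source strong component $C$, i.e. one with no incoming arc from any other component. Let $G' = G - V(C)$. Because $C$ is a source, every arc of $G$ joining $C$ and $G'$ is directed from $C$ to $G'$, so the arc set of $G$ is exactly $E(G[V(C)]) \cup E(G') \cup A$, where $A$ is the set of arcs running from $C$ to $G'$. This is precisely the digraph produced by the directed union of $G[V(C)]$ and $G'$ with additional arc set $A$; thus $G$ is the directed union of $G[V(C)]$ and $G'$. Now $G[V(C)] = C$ is a strong component of $G$ and hence a directed co-graph by Lemma~\ref{lem:twin_dh_co}, so it is an extended directed co-graph. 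It remains to show that $G'$ is an extended directed co-graph, and since $G'$ has fewer vertices than $G$ this will follow from the induction hypothesis.

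The main obstacle is exactly this last point: removing a source strong component may disconnect the remaining digraph, so $G'$ need not be weakly connected and Lemma~\ref{lem:closed} cannot be applied to $G'$ directly. I would handle this by applying Lemma~\ref{lem:closed} to each weak component of $G'$ separately, each such weak component being a weakly connected induced subdigraph of $G$ and hence twin-dh; by the induction hypothesis each is then an extended directed co-graph, so that $G'$ is an extended directed co-graph by closure under disjoint union (equivalently, $G'$ is itself twin-dh by Definition~\ref{def:didh} as a disjoint union of twin-dh digraphs, and the hypothesis applies directly). Combining this with the fact that $G$ is the directed union of $C$ and $G'$ completes the induction and shows that every twin-dh digraph is an extended directed co-graph.
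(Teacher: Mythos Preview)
Your proof is correct and follows essentially the same strategy as the paper: both arguments rely on Lemma~\ref{lem:twin_dh_co} to see that every strong component is a directed co-graph, and then use the acyclic structure of the condensation to glue the components together via directed union. The paper states this as a brief sketch (build the co-graph expressions of the strong components and connect them by directed unions), whereas you make the argument rigorous by induction, peeling off a source component at each step and carefully handling the possible disconnection of the remainder via Lemma~\ref{lem:closed}; the underlying idea is the same.
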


\begin{proof}
  Let $G$ be a twin-dh digraph. With the following procedure we can get a construction of $G$ with the extended directed co-graph operations. We know from Lemma \ref{lem:twin_dh_co} that the strong components are directed co-graphs, thus we build the di-co-tree of these components. If a vertex does not belong to any bigger strong component it can be seen as its own strong component. The missing arc which connect the different strong components in $G$ are built by directed union operations, where we can leave our all arcs except for the arc of the corresponding pendant vertex.

\end{proof}

This result allows us to deduce some results how to solve several graph parameters on this graph class. However, we show that we can even do better on twin-dh digraphs. Furthermore, twin-dh digraphs have a decisive advantage compared to its superclass since it has bounded directed clique-width, which will be defined later. As we can build directed grids with the directed union operation in extended directed co-graphs, the directed clique-width for this class is not bounded. This allows us to solve many problems on twin-dh-digraphs, which cannot be solved on extended directed co-graphs. The reason for this is, that we lose information about the edges and therefore about the reachability within extended directed co-graphs which we preserve in the subclass.

Next, we show how this class is related to the class of distance-hereditary digraphs from \cite{LR10}.

\subsubsection{Twin-DH Digraphs are distance-hereditary}

Though for the definition we used the approach of a recursive construction by twins and pendant vertices, twin-dh digraphs still fulfill the distance-heredity property.

\begin{theorem} \label{ddh-ddh}
Every twin-distance-hereditary digraph $G$ is distance-hereditary, i.e. for every two vertices $u$ and $v$ in $V(G)$, all induced $u,v$-paths have the same length.
\end{theorem}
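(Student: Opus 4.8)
The quickest route is to lean on the proposition just proved, that $\un(G)$ is distance-hereditary. Whether one reads a $u,w$-path of the digraph as a path of $\un(G)$ or as a genuinely directed path $u=p_0\to p_1\to\cdots\to p_k=w$, the claim reduces to the undirected statement. Indeed, an induced directed $u,w$-path (one whose vertex set induces in $G$ exactly the arcs of the path) projects, on forgetting orientations, onto an induced $u,w$-path of $\un(G)$ of the very same length $k$: any extra edge $\{p_i,p_j\}$ of $\un(G)$ would arise from an arc $(p_i,p_j)$ or $(p_j,p_i)$ of $G$, contradicting that the $p_i$ induce only the path. Hence the induced $u,w$-paths of $G$ form, length-wise, a subfamily of the induced $u,w$-paths of $\un(G)$; since the proposition guarantees the latter all share one length, so do the former. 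The only point needing care is fixing the definition of an induced $u,w$-path in a digraph so that this projection is itself induced; once that is pinned down the theorem follows at once.

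For a self-contained argument not routed through $\un(G)$, I would induct on the length $n-1$ of a directed pruning sequence $S=(s_1,\dots,s_{n-1})$ of $G$. The base $n=1$ is vacuous. For the step, let $v=v_{n-1}$ be added by $s_{n-1}=(v,op,a)$; the prefix $(s_1,\dots,s_{n-2})$ is a directed pruning sequence of $G'=G-v$, so $G'\in\DDH$ is distance-hereditary by induction. Fix $u,w$ and two induced $u,w$-paths: if neither uses $v$, both lie in $G'$ and the hypothesis applies verbatim. If a path uses $v$ and $op\in\{\ppv,\pmv\}$, then $v$ has a single neighbour $a$ in $\un(G)$, so $v$ can occur only as an endpoint; writing such a path as $v,a,\dots,w$ and deleting $v$ gives an induced $a,w$-path of $G'$, and conversely, so by induction all such paths have length $d+1$ for the common length $d$ of induced $a,w$-paths in $G'$. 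Orientation is automatically consistent, since a pendant plus vertex can only start a directed path and a pendant minus vertex can only end one.

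The main obstacle is the twin case $op\in\{\ft,\tot,\tit,\tdt\}$, where $v$ and $a$ share in- and out-neighbourhoods apart from the arc(s) between them. I would split an induced $u,w$-path $P$ through $v$ according to whether $a\in V(P)$. If $a\notin V(P)$, then replacing $v$ by $a$ yields a path $P'$ of the same length, because for the predecessor $x$ and successor $y$ of $v$ on $P$ we have $x\in N^-(v)\setminus\{a\}=N^-(a)\setminus\{v\}$ and $y\in N^+(v)\setminus\{a\}=N^+(a)\setminus\{v\}$, so arc directions are preserved, and no other vertex of $P$ is adjacent to $a$ since it was not adjacent to $v$; thus $P'$ is induced, lies in $G'$, and the hypothesis fixes the length of $P$. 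If $a\in V(P)$, the twin relation together with inducedness pins the local structure down: for a true twin ($\tot,\tit,\tdt$) the only induced $v$–$a$ path is the single arc between them, since $v\sim a$ makes $\{v,a\}$ a chord of any longer path; for a false twin ($\ft$) the only induced $v$–$a$ paths have length $2$, since a path $v,x_1,\dots,x_m,a$ with $m\ge 2$ has $x_m\in N(a)=N(v)$ and hence the chord $\{v,x_m\}$. Checking these chord arguments uniformly across the four orientation types, and confirming that the replacement in the case $a\notin V(P)$ respects directions, is the delicate part; the rest is bookkeeping on the pruning sequence.
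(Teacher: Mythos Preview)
Your first approach—projecting each induced directed $u,w$-path onto an induced undirected $u,w$-path in $\un(G)$ of the same length—is correct and considerably more economical than the paper's own argument. The paper does not take this route: it appeals to \cite{LR10} for the claim that the pendant, oriented-twin, and false-twin operations preserve the distance-hereditary property, and then supplies a separate lemma (Lemma~\ref{lem:distance_of_twins}) bounding the shortest path between twins by~$2$ in every induced subdigraph, in order to cover the new bioriented-twin operation; the proof of that lemma in turn leans on the forbidden-subdigraph list ($\overrightarrow{C_3}$, $H_{16}$, $H_{19}$) and on the absence of holes in $\un(G)$. Your projection argument sidesteps all of this: once the proposition $\un(G)\in\UDH$ is available, the observation that every induced directed path yields an induced undirected path of the same length finishes the theorem in one line. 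What the paper's approach buys is Lemma~\ref{lem:distance_of_twins} as a by-product of independent structural interest; what yours buys is brevity and independence from both the external reference \cite{LR10} and the forbidden-subdigraph characterisation. Your second, inductive sketch is closer in spirit to the paper's operation-by-operation scheme, but is more self-contained than the paper, which outsources the non-bioriented cases to \cite{LR10}; the twin-replacement step you describe when $a\notin V(P)$ is exactly the kind of argument the paper's cited source uses.
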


In \cite{LR10} the authors claim that for pendant vertices, for (slightly different, but more general defined) oriented twins and for false twins the distance-hereditary property remains fulfilled. However, the result that every path between two distinct vertices is of length one does not hold in general when including bioriented edges. This is why we need the following lemma, which leads us directly to the theorem above.

\begin{lemma}\label{lem:distance_of_twins}
 For two twins $u,v$ in a twin-dh digraph $G$ it holds that if there exists a path from $u$ to $v$ then the shortest path in every induced subdigraph $G'$ of $G$ is $\leq 2$.
\end{lemma}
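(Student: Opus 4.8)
The plan is to reduce the statement to the co-graph structure of the strong component containing $u$ and $v$. Throughout I read the statement as: assuming a directed $u,v$-path exists in $G$, then for every induced subdigraph $G'$ of $G$ containing $u$ and $v$ in which a $u,v$-path still exists, its shortest $u,v$-path has length at most $2$. First I would dispose of the trivial case: if $(u,v)\in E(G)$ (this covers the true in-twin and bioriented true twin situations), then the arc $(u,v)$ survives in every induced $G'$ containing both vertices, giving distance $1$. Hence I may assume $(u,v)\notin E(G)$ and fix a $u,v$-path $u=w_0\to w_1\to\cdots\to w_k=v$ of length $k\ge 2$.

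Next I would show that $u$ and $v$ lie in a common strong component. Since $w_1\in N^+(u)$ and $w_1\ne v$, the twin property gives $w_1\in N^+(v)$, i.e. $(v,w_1)\in E(G)$; combined with the subpath $w_1\to\cdots\to w_k=v$ this yields a closed walk $v\to w_1\to\cdots\to v$, so $v,w_1,\ldots,w_{k-1}$ all lie in one strong component. Symmetrically $w_{k-1}\in N^-(v)\setminus\{u\}=N^-(u)\setminus\{v\}$ gives $(w_{k-1},u)\in E(G)$, and together with $u\to w_1\to\cdots\to w_{k-1}$ this places $u$ in the same component. Call this strong component $C$; by Lemma~\ref{lem:twin_dh_co}, $C$ is a directed co-graph, and since $C$ is an induced subdigraph of $G$, the vertices $u,v$ remain twins in $C$.

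The core is then a claim purely about co-graphs, which I would prove by induction on $|V(C)|$: if $x,y$ are twins in a directed co-graph $C$, then every induced subdigraph $D$ of $C$ containing $x,y$ and admitting an $x,y$-path satisfies $d_D(x,y)\le 2$. Writing $C=C_1\star C_2$ with $\star\in\{\oplus,\oslash,\otimes\}$, the case $(x,y)\in E(C)$ is immediate as above. If $(x,y)\notin E(C)$, then $x,y$ cannot lie on opposite sides of a $\otimes$ (that would force the arc $(x,y)$), nor of a $\oslash$ or $\oplus$ while admitting an $x,y$-path, so both lie in the same part, say $C_1$. For $\star=\otimes$: if $V(D)$ meets $C_2$, any $w\in V(D)\cap V(C_2)$ gives $x\to w\to y$ in $D$; otherwise $D$ is an induced subdigraph of $C_1$ and I apply induction to the twins $x,y$ in $C_1$. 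For $\star\in\{\oplus,\oslash\}$ every $x,y$-path stays inside $C_1$, so again induction applies; the cases $|V(C)|\le 2$ are vacuous.

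Finally I would transfer this back to an arbitrary induced subdigraph $G'$ of $G$. Because $C$ is a strong component and $u,v\in C$, every vertex of any $u,v$-path in $G$ lies in $C$: each interior vertex $p$ of such a path is reachable from $u$ and reaches $v$ along the path, while $v$ reaches $u$ inside $C$, so $p$ lies on a cycle through $u$ and hence belongs to $C$. Since every path in $G'$ is also a path in $G$, any $u,v$-path in $G'$ stays inside $C$; therefore $d_{G'}(u,v)=d_D(u,v)$ for $D=G'[V(G')\cap V(C)]$, an induced subdigraph of $C$, and the co-graph claim gives $d_{G'}(u,v)\le 2$. I expect the main obstacle to be the second step, namely using the twin identities to reverse the given path and certify that $u$ and $v$ share a strong component, together with the care needed in the last step to confine every $u,v$-path of $G'$ to $C$, so that the co-graph bound applies to subdigraphs of all of $G$ and not merely of $C$.
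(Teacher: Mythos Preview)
Your proof is correct and takes a genuinely different route from the paper. The paper argues directly by contradiction: assuming a shortest $u,v$-path of length $\ge 3$ in some $G'$, it uses the twin identities to close this path into a cycle, then invokes the hole-freeness of $\un(G)$ together with the non-constructibility of the specific configurations $\overrightarrow{C_3}$, $H_{16}$, $H_{19}$ to derive a contradiction through case analysis on forward and backward chords. You instead reduce to Lemma~\ref{lem:twin_dh_co}: once $u$ and $v$ are placed in a common strong component $C$ (your reversal-of-the-path argument via $w_1$ and $w_{k-1}$ is exactly right), the question becomes one about induced subdigraphs of a directed co-graph, which you settle by a clean induction on the $\oplus/\oslash/\otimes$ decomposition. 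Your argument is more structural and avoids any ad hoc inspection of small forbidden configurations, at the cost of depending on the earlier lemma about strong components; the paper's proof is more self-contained but considerably sketchier in its case distinctions. The step confining every $u,v$-path of $G'$ to $C$, which you flagged as delicate, is handled correctly, and since Lemma~\ref{lem:twin_dh_co} is proved independently of the distance-hereditary property there is no circularity.
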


Note that the proof could be shortened using Theorem 4 of \cite{LR10}.
\begin{proof}
  Let $u,v\in V(G)$ be twins in $G$.
  If they are bioriented twins, the distance between them is trivially $1$.
  If $u,v$ are oriented twins let w.l.o.g. be $(u,v)\in E(G)$. Then the distance from $u$ to $v$ is also $1$, but this is not the case for the other direction.
  So let $u$ and $v$ be oriented twins with $(v,u)\in E(G)$ or false twins. In order to proof the lemma by contradiction, we assume that there is a shortest path from $u$ to $v$ of length $\geq 3$ in an induced subdigraph $G'$ of $G$. Let this path be $P=(u,v_1,\ldots, v_k,v)$. Since $N^-_{G'}(v)=N^-_{G'}(u)$ and $N^+_{G'}(v)=N^+_{G'}(u)$ it holds that $(v,v_1)\in E(G')$ and $(v_k,u)\in E(G')$.
  Then there is a cycle $(u,v_1,\ldots, v_k,u)$ of length at least $3$.
  If the length is $3$ with there must be at least two bidirectional edges in this cycle, otherwise this cycle is not constructible by directed twins.
  But then, one of the bidirectional edges goes to $u$ and since $v$ is a twin, we could have taken this shorter path $(u,v_i,v)$ of length $2$ from the beginning, which is a contradiction to the assumption of length $3$.
  Let's assume the shortest path is $>3$. Then there is a cycle $u,v_1,v_2, \ldots, v_k,u$ with the same argumentation as before. Since $un(G)$ is distance-hereditary, there cannot be any holes, thus induced cycles of length $\geq 5$. Thus, the cycles must contain edges in between. If these edges are forward edges along the cycle they would shorten the path from $u$ to $v$ which is a contradiction. If these edges are backward edges along the cycle, they would again build smaller induced cycles, up to a $\overrightarrow{C_3}$ which is not constructible by a directed pruning sequence. Backward edges are only possible, if the outer edges from the cycle are bioriented. But this would build an induced subdigraph $H_{19}$ or $H_{16}$ (Fig. \ref{fig:forbidden_DH}), which are not constructible with a directed pruning sequence and thus are not directed twin-distance-hereditary.
  Thus, such a path cannot exists and the shortest path is always of length $\leq 2$.
\end{proof}

With the same example as for extended co-graphs, the class of distance-hereditary digraphs has unbounded directed clique-width. In a grid digraph, where all edges are directed from the top to the bottom and left to right, the directed clique-width increases with the number of vertices. Here we see a certain advantage of the class of  twin-dh digraphs which justifies to take a closer look.

\subsection{Characterization of Twin-DH Digraphs}

As already mentioned previously, our definition is not only based on the property of distance heredity as in the undirected case, or regarding distance-hereditary digraphs.
That is, not every digraph which is distance-hereditary, is also a twin-dh digraph. This can be easily shown by e.g. a bioriented path.
However, it is possible to give different characterizations of the class $\DDH$ by forbidden induced subdigraphs.

We give a characterization by forbidden induced subdigraphs.
Therefore, we first need to define the two-leaves-digraph.

\begin{definition}\label{def:two-leaves}
  A weakly connected digraph $G$ is a two-leaves-digraph if it has at least $4$ vertices and if it contains at least two bioriented leaves $u,v$ with $N(u)\neq N(v)$ in $un(G)$, see Fig. \ref{fig:forbidden_DH}.
\end{definition}

\begin{theorem}\label{the:forbiddenSG}
  A digraph $G$ is directed Twin-distance-hereditary if and only if it contains none of the the following graphs, see Fig. \ref{fig:forbidden_DH} as induced subdigraph.
  \begin{itemize}
    \item $\overrightarrow{C_3}$.
    \item any biorientation of the $C_n$ (hole) for $n\geq 5$, domino, house or gem.
    \item $H_0 , \ldots , H_{27}$.
    \item A two-leaves-digraph.
  \end{itemize}
\end{theorem}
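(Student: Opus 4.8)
The plan is to prove both directions of the characterization. For the forward direction, I would argue that each of the listed graphs cannot occur as an induced subdigraph of a twin-dh digraph. The key tool is Lemma~\ref{lem:closed}: since $\DDH$ is closed under weakly connected induced subdigraphs, it suffices to show that none of the forbidden graphs is itself twin-dh. For each such graph $H$ I would attempt to build a directed pruning sequence and derive a contradiction. For $\overrightarrow{C_3}$ this is the base obstruction already used in Lemma~\ref{lem:distance_of_twins}: no twin or pendant operation can create a directed triangle, since the last vertex added would need to be simultaneously an in- and out-neighbour of the same anchor without being a biorientation. For the biorientations of holes, domino, house and gem, I would invoke the Proposition stating that $\un(G)$ must be distance-hereditary: since these underlying graphs are exactly the forbidden induced subgraphs for undirected $DH$ (item~\ref{lem:hhdg-free}), any digraph whose underlying graph is one of them cannot be twin-dh. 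The two-leaves-digraph is ruled out because a weakly connected twin-dh digraph on $\ge 4$ vertices cannot have two bioriented leaves with distinct neighbourhoods: a bioriented leaf must be a pendant-type vertex attached by a biorientation, but a single biorientation to a leaf is not one of the permitted pendant operations (which are strictly $\ppv$ or $\pmv$, i.e. one-directional), so such a vertex could only arise as a bioriented true twin, forcing $N(u)=N(v)$ for the two leaves. The graphs $H_0,\dots,H_{27}$ I expect to be the most labor-intensive: these are presumably the finitely many directed orientations not covered by the underlying-graph argument, and for each I would check directly that the last addable vertex fails every clause of Definition~\ref{def:twins}.

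For the converse I would argue contrapositively: if $G$ is weakly connected and not twin-dh, I must exhibit one of the forbidden subdigraphs. Here I would reason by a minimal counterexample. Take $G$ weakly connected, containing none of the listed graphs, and suppose toward contradiction that $G$ is not twin-dh; choose $G$ with $|V(G)|$ minimal. Since $\un(G)$ contains no house, hole, domino or gem (as none of their biorientations embed), $\un(G)$ is distance-hereditary, so $\un(G)$ admits an undirected pruning sequence and in particular has a vertex $w$ that is either a twin or a pendant vertex in $\un(G)$. The crux is to promote this undirected pruning step to a \emph{directed} one: I would show that the orientation of the edges incident to $w$ must be compatible with one of the six operations in Definition~\ref{def:didh}, using the absence of $\overrightarrow{C_3}$, the $H_i$, and the two-leaves-digraph to rule out every incompatible orientation pattern. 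Once $w$ is a legal directed twin or pendant vertex, $G-w$ is again weakly connected (or its components are each smaller and forbidden-graph-free), contains none of the forbidden subdigraphs since induced subdigraphs inherit this, and is therefore twin-dh by minimality; appending the operation for $w$ yields a directed pruning sequence for $G$, contradicting the assumption.

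The main obstacle is exactly this promotion step in the converse: an undirected twin or pendant of $\un(G)$ need not respect a consistent orientation, and the role of the graphs $H_0,\dots,H_{27}$, the $\overrightarrow{C_3}$, and the two-leaves-digraph is precisely to forbid the bad orientation patterns. I would organize this as a careful case analysis on the type of $w$ in $\un(G)$: if $w$ is an undirected pendant, its single incident edge is oriented one way, both ways, giving $\ppv$, $\pmv$, or a bioriented leaf; the bioriented-leaf case is where I must either find that $w$ already matches another vertex (making it a $\tdt$) or else produce a two-leaves-digraph. If $w$ is an undirected twin of some $u$, I would compare the orientations of the edges from $w$ and from $u$ to each common neighbour and show that any mismatch creates a $\overrightarrow{C_3}$ or one of the $H_i$, forcing the orientations to agree and hence $w$ to be a directed twin of one of the four types. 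Verifying that each mismatch genuinely produces a listed forbidden subdigraph, and that the list $H_0,\dots,H_{27}$ is exhaustive for this purpose, is the delicate bookkeeping I expect to consume most of the proof.
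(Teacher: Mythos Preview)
Your overall architecture matches the paper's proof closely: for the forward direction you use Lemma~\ref{lem:closed} plus a check that each listed digraph is not in $\DDH$, and for the converse you exploit that $\un(G)$ is distance-hereditary (via the house/hole/domino/gem exclusion), take an undirected pruning step, and try to promote it to a directed one using the remaining obstructions. The paper does the same, only it fixes one undirected pruning sequence up front rather than running a minimal-counterexample induction; these are equivalent formulations.

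The substantive gap is the step you yourself single out as ``delicate bookkeeping'': showing that whenever an undirected twin or pendant of $\un(G)$ fails to be a directed twin or pendant of $G$, one of the listed obstructions is forced. This step cannot be completed as stated, because the list is not exhaustive. Take $G$ on $\{a,b,c,d\}$ with arcs $(a,b),(b,a),(b,c),(d,b),(d,c)$. Then $\un(G)$ is the paw $K_3$ plus a pendant (a co-graph, hence distance-hereditary); one checks directly that $G$ has no directed pendant and no pair of directed twins, so $G\notin\DDH$. Yet every proper connected induced subdigraph of $G$ is twin-dh (in particular $G[\{b,c,d\}]$ is the transitive tournament, which is twin-dh), $G$ contains no $\overrightarrow{C_3}$ and no copy of $H_0$, it has only one bioriented leaf so it is not and does not contain a two-leaves-digraph, and since $\un(G)$ is the paw---not $C_4$, the diamond, or $K_4$---it is not isomorphic to any $H_1,\dots,H_{27}$. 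Thus $G$ avoids all the listed forbidden subdigraphs while failing to be twin-dh.

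In your case analysis this manifests exactly where you predict trouble: $a$ is an undirected pendant of $\un(G)$ with a bioriented edge, it is not a $\tdt$ of anything (its anchor $b$ has further neighbours), and there is no second bioriented leaf with which to form a two-leaves-digraph; the undirected twins $c,d$ of $\un(G)$ disagree in orientation toward $b$, but the resulting $4$-vertex pattern has underlying paw and so is not among the $H_i$. Hence the promotion argument stalls, and the paper's own assertion that $\mathcal H$ ``contains every graph with $\le 4$ vertices that cannot be constructed by the directed twin operations'' does not cover this case either.
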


\begin{proof}
\begin{itemize}
  \item $\Rightarrow$ None of the graphs can be constructed with directed twins and directed pendant vertices and the class is hereditary, see Lemma \ref{lem:closed}.
  \item $\Leftarrow$ We proof this by contradiction. Let $G$ be a graph that does not contain any of the forbidden induced subgraphs above and let's assume that $G\not\in \DDH$. A graph is not Twin-distance-hereditary if it cannot be constructed by the directed twin and pendant vertices operations. We distinguish two cases $G\not\in \DDH \land un(G)\not\in \UDH$ and $G\not\in \DDH \land un(G)\in DH$. Case one is that $G$ is not twin-dh for structural reasons, thus $G\not\in \DDH \land un(G)\not\in \UDH$ which is ensured by the exclusion of any biorientation of the $C_n$ (hole) for
  $n\geq 5$, domino, house or gem, see Fig. \ref{fig:forbidden_DH}.\\
  In the other case $G\not\in \DDH \land un(G)\in \UDH$ the graph is not twin-dh for orientation reasons. This means that there exists a pruning sequence $P$ for $un(G)$ but there is no directed pruning sequence for $G$ because the arcs have a biorientation, which cannot be achieved by the directed twin and pendant vertex operations. By forbidding the two-leaves-digraphs, the pendant vertex operations are not allowed to be bioriented and thus, every undirected pendant vertex operation in $P$ can be replaced by a directed pendant vertex operation.
  It is left to show that $G$ has as well none of the digraphs of set $\mathcal{H}=\{\overrightarrow{C_3}, H_0,\ldots , H_{27}\}$ as induced subdigraph. Note that $\mathcal{H}$ contains every graph with $\leq 4$ vertices that cannot be constructed by the directed twin operations, with no inclusions. If we look at every possibly biorientation of the operations in $P$, we get any possible directed pruning sequence of $G$. For every induced subdigraph $H$ of $G$ with $\leq 4$ vertices there must exists a biorientation, such that there is a directed pruning sequence, since the set $\mathcal{H}$ is exactly the set of graphs with $\leq 4$ that has no directed pruning sequence.
  There are no more forbidden induced subdigraphs $H'$ that contains none of the previous excluded graphs as induced subgraph with more than $4$ vertices for the following reason.
  Assume there is an induced subdigraph $H'$ of $G$ with $\geq 5$ vertices which is minimal in the sense that it does not contain a graph from $\mathcal{H}$ as induced subdigraph and for which there is no directed pruning sequence.
  Let $V(H')=\{t_1,t_2,u,v,w_1,...,w_k\}$ with $k \geq 1$ be the vertex set of $H'$, where $t_1$ and $t_2$ are twins in the undirected pruning sequence $P'$ of $H'$.
  As $H'$ is minimal, every induced subdigraph $H^*$ of $H'$ with $4$ vertices is not forbidden. Thus, the different directed neighborhood of $t_1$ and $t_2$ must arise by adding the fifth vertex $w_1$.
  But if this vertex causes an orientation problem in $H[\{t_1,t_2,u,v,w_1\}]$ then this vertex also causes an orientation problem in $H[\{t_1,t_2,u,w_1\}]$ which would build a forbidden induced subdigraph with $4$ vertices. We end up in the same problem if we chose any other two twins. Thus, there cannot be a forbidden induced subdigraph with more than $4$ vertices for which there is no directed pruning sequence, if an undirected pruning sequence exists and $G\in \DDH$.
\end{itemize}

\end{proof}

To get an better understanding of the construction of the forbidden induced subdigraphs $H_0 , \ldots , H_{27}$ we group them as follows. In none of them we can find directed twins, but the undirected versions of them are distance hereditary.
\begin{itemize}
  \item $H_0,\ldots , H_5$: Digraphs with $4$ or less vertices with $un(G)=C_4$ which are strongly connected.
  \item $H_6,\ldots , H_9$: Digraphs with $4$ vertices with $un(G)=C_4$ which are not strongly connected.
  \item $H_{10},\ldots , H_{19}$: Digraphs with $4$ vertices with $un(G)=C_4$ with an additional single diagonal edge.
  \item $H_{20},\ldots , H_{26}$: Digraphs with $4$ vertices with $un(G)=C_4$ with an additional bidirectional diagonal edge.
  \item $H_{27}$: Forbidden orientation of the $K_4$.
\end{itemize}

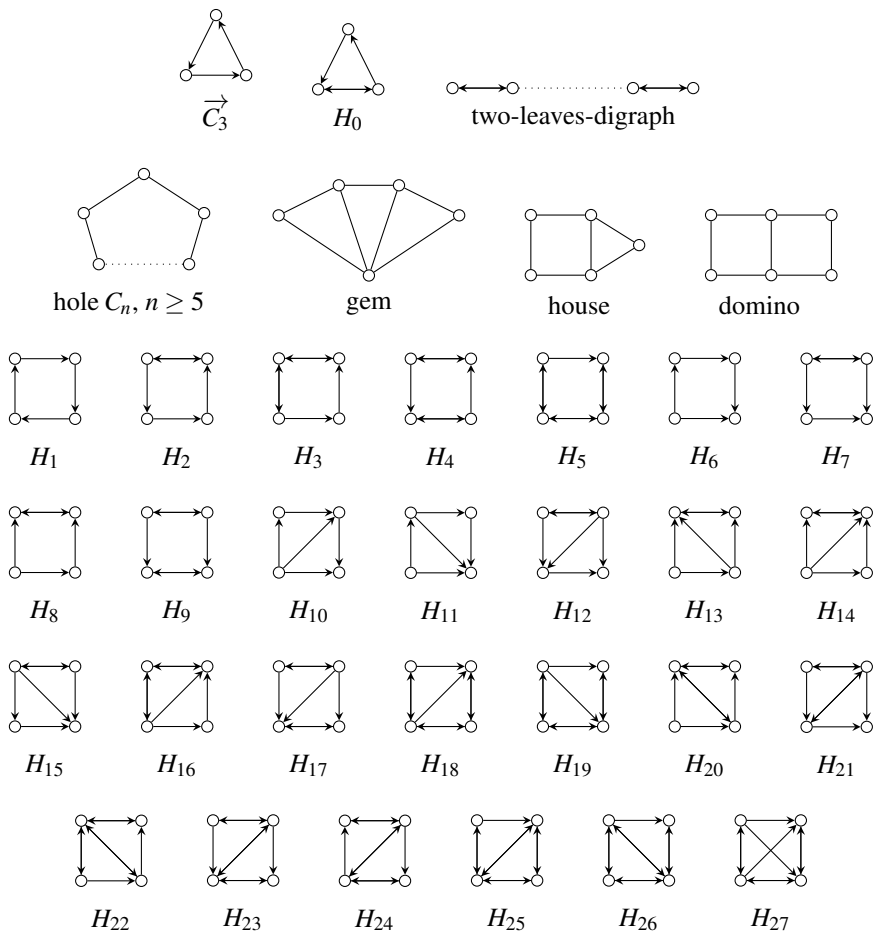
\begin{figure}[h!]
  \tikzstyle{vertexVi}=[circle,inner sep=1.5pt,fill=black]
  \tikzstyle{vertexY}=[draw,circle,inner sep=1.5pt]
  \tikzstyle{vertex}=[draw,circle,inner sep=1.5pt]
  \begin{center}
  \begin{tikzpicture}[>=stealth, scale=.4]
  \node[vertex] (x) at (0,1) {};
  \node[vertex] (y) at (2,1) {};
  \node[vertex] (z) at (1,3) {};
  \draw[->] (x)--(y);
  \draw[->] (y)--(z);
  \draw[->] (z)--(x);
  \node at (1,-.3) {$\overrightarrow{C_3}$};
  \end{tikzpicture}
  \qquad
  \begin{tikzpicture}[>=stealth, scale=.4]
  \node[vertex] (x1) at (0,0) {};
  \node[vertex] (x2) at (2,0) {};
  \node[vertex] (x3) at (1,2) {};
  \draw[->] (x1)--(x2);
  \draw[->] (x2)--(x3);
  \draw[->] (x2)--(x1);
  \draw[->] (x3)--(x1);
  \node at (1,-1) {$H_0$};
  \end{tikzpicture}
  \qquad
  \begin{tikzpicture}[>=stealth, scale=.4]
  \node[vertex] (x) at (0,0) {};
  \node[vertex] (y) at (2,0) {};
  \node[vertex] (z) at (6,0) {};
  \node[vertex] (w) at (8,0) {};
  \draw[->] (x)--(y);
  \draw[->] (y)--(x);
  \draw[->] (z)--(w);
  \draw[->] (w)--(z);
  \draw[-, dotted] (y)--(z);
  \node at (0,-.6) {};
  \node at (2,-.6) {};
  \node at (6,-.6) {};
  \node at (8,-.6) {};
  \node at (4,-1) {two-leaves-digraph};
  \end{tikzpicture}
  \end{center}

  \begin{center}
  \begin{tikzpicture}[>=stealth, scale=.4]
  \node[vertex] (x) at (0.5,1) {};
  \node[vertex] (y) at (0,2.7) {};
  \node[vertex] (z) at (2,4) {};
  \node[vertex] (v) at (4,2.7) {};
  \node[vertex] (w) at (3.5,1) {};
  \draw[-] (x)--(y);
  \draw[-] (y)--(z);
  \draw[-] (z)--(v);
  \draw[-] (w)--(v);
  \draw[dotted] (w)--(x);
  \node at (1.5,-.3) {hole $C_n$, $n\geq 5$};
  \end{tikzpicture}
  \qquad
  \begin{tikzpicture}[>=stealth, scale=.4] \node[vertex] (a) at (0,2) {};
  \node[vertex] (b) at (2,3) {};
  \node[vertex] (c) at (4,3) {};
  \node[vertex] (d) at (6,2) {};
  \node[vertex] (u) at (3,0) {};
  \draw[-] (d)--(c);
  \draw[-] (c)--(b);
  \draw[-] (b)--(a);
  \draw[-] (c)--(u);
  \draw[-] (d)--(u);
  \draw[-] (u)--(a);
  \draw[-] (u)--(b);
  \node at (3,-1) {gem};
  \end{tikzpicture}
  \qquad
  \begin{tikzpicture}[>=stealth, scale=.4] \node[vertex] (x1) at (0,0) {};
  \node[vertex] (x2) at (0,2) {};
  \node[vertex] (x3) at (2,2) {};
  \node[vertex] (x4) at (2,0) {};
  \node[vertex] (x5) at (3.6,1) {};
  \draw[-] (x1)--(x2);
  \draw[-] (x1)--(x4);
  \draw[-] (x2)--(x3);
  \draw[-] (x3)--(x4);
  \draw[-] (x3)--(x5);
  \draw[-] (x4)--(x5);
  \node at (1.6,-1) {house};
  \end{tikzpicture}
  \qquad
  \begin{tikzpicture}[>=stealth, scale=.4] \node[vertex] (x1) at (0,0) {};
  \node[vertex] (x2) at (0,2) {};
  \node[vertex] (x3) at (2,2) {};
  \node[vertex] (x4) at (2,0) {};
  \node[vertex] (x5) at (4,0) {};
  \node[vertex] (x6) at (4,2) {};
  \draw[-] (x1)--(x2);
  \draw[-] (x1)--(x4);
  \draw[-] (x2)--(x3);
  \draw[-] (x3)--(x4);
  \draw[-] (x3)--(x6);
  \draw[-] (x4)--(x5);
  \draw[-] (x5)--(x6);
  \node at (1.6,-1) {domino};
  \end{tikzpicture}
  \ifdefined\journal
  \qquad
  \begin{tikzpicture}[>=stealth, scale=.4]
  \node[vertex] (x1) at (0,0) {};
  \node[vertex] (x2) at (2,0) {};
  \node[vertex] (x3) at (1,2) {};
  \draw[-] (x1)--(x2);
  \draw[-] (x1)--(x3);
  \draw[-] (x2)--(x3);
  \node at (1,-1) {$triangle$};
  \end{tikzpicture}
  \fi
  \end{center}

  \begin{center}
  \begin{tikzpicture}[>=stealth, scale=.4]
  \node[vertex] (x) at (0,1) {};
  \node[vertex] (y) at (0,3) {};
  \node[vertex] (z) at (2,3) {};
  \node[vertex] (w) at (2,1) {};
  \draw[->] (x)--(y);
  \draw[->] (y)--(z);
  \draw[->] (z)--(w);
  \draw[->] (w)--(x);
  \node at (1,-.3) {$H_1$};
  \end{tikzpicture}
  \qquad
  \begin{tikzpicture}[>=stealth, scale=.4]
  \node[vertex] (x) at (0,1) {};
  \node[vertex] (y) at (2,1) {};
  \node[vertex] (z) at (0,3) {};
  \node[vertex] (w) at (2,3) {};
  \draw[->] (z)--(w);
  \draw[->] (w)--(z);
  \draw[->] (x)--(y);
  \draw[->] (z)--(x);
  \draw[->] (y)--(w);
  \node at (1,-.3) {$H_2$};
  \end{tikzpicture}
  \qquad
  \begin{tikzpicture}[>=stealth, scale=.4]
  \node[vertex] (x) at (0,1) {};
  \node[vertex] (y) at (2,1) {};
  \node[vertex] (z) at (0,3) {};
  \node[vertex] (w) at (2,3) {};
  \draw[->] (z)--(w);
  \draw[->] (x)--(z);
  \draw[->] (w)--(z);
  \draw[->] (z)--(x);
  \draw[->] (x)--(y);
  \draw[->] (y)--(w);
  \node at (1,-.3) {$H_3$};
  \end{tikzpicture}
  \qquad
  \begin{tikzpicture}[>=stealth, scale=.4]
  \node[vertex] (x) at (0,1) {};
  \node[vertex] (y) at (2,1) {};
  \node[vertex] (z) at (0,3) {};
  \node[vertex] (w) at (2,3) {};
  \draw[->] (z)--(w);
  \draw[->] (w)--(z);
  \draw[->] (x)--(y);
  \draw[->] (y)--(x);
  \draw[->] (z)--(x);
  \draw[->] (y)--(w);
  \node at (1,-.3) {$H_4$};
  \end{tikzpicture}
  \qquad
  \begin{tikzpicture}[>=stealth, scale=.4]
  \node[vertex] (x) at (0,1) {};
  \node[vertex] (y) at (2,1) {};
  \node[vertex] (z) at (0,3) {};
  \node[vertex] (w) at (2,3) {};
  \draw[->] (z)--(w);
  \draw[->] (z)--(x);
  \draw[->] (x)--(z);
  \draw[->] (x)--(y);
  \draw[->] (y)--(x);
  \draw[->] (y)--(w);
  \draw[->] (w)--(y);
  \node at (1,-.3) {$H_5$};
  \end{tikzpicture}
  \qquad
  \begin{tikzpicture}[>=stealth, scale=.4]
  \node[vertex] (x) at (0,1) {};
  \node[vertex] (y) at (2,1) {};
  \node[vertex] (z) at (0,3) {};
  \node[vertex] (w) at (2,3) {};
  \draw[->] (x)--(y);
  \draw[->] (x)--(z);
  \draw[->] (z)--(w);
  \draw[->] (w)--(y);
  \node at (1,-.3) {$H_6$};
  \end{tikzpicture}
  \qquad
  \begin{tikzpicture}[>=stealth, scale=.4]
  \node[vertex] (x) at (0,1) {};
  \node[vertex] (y) at (2,1) {};
  \node[vertex] (z) at (0,3) {};
  \node[vertex] (w) at (2,3) {};
  \draw[->] (z)--(x);
  \draw[->] (w)--(y);
  \draw[->] (z)--(w);
  \draw[->] (w)--(z);
  \draw[->] (x)--(y);
  \node at (1,-.3) {$H_7$};
  \end{tikzpicture}
  \end{center}

  \begin{center}
  \begin{tikzpicture}[>=stealth, scale=.4]
  \node[vertex] (x) at (0,1) {};
  \node[vertex] (y) at (2,1) {};
  \node[vertex] (z) at (0,3) {};
  \node[vertex] (w) at (2,3) {};
  \draw[->] (x)--(z);
  \draw[->] (y)--(w);
  \draw[->] (z)--(w);
  \draw[->] (w)--(z);
  \draw[->] (x)--(y);
  \node at (1,-.3) {$H_8$};
  \end{tikzpicture}
  \qquad
  \begin{tikzpicture}[>=stealth, scale=.4]
  \node[vertex] (x) at (0,1) {};
  \node[vertex] (y) at (2,1) {};
  \node[vertex] (z) at (0,3) {};
  \node[vertex] (w) at (2,3) {};
  \draw[->] (z)--(w);
  \draw[->] (w)--(z);
  \draw[->] (x)--(y);
  \draw[->] (y)--(x);
  \draw[->] (z)--(x);
  \draw[->] (w)--(y);
  \node at (1,-.3) {$H_9$};
  \end{tikzpicture}
  \qquad
  \begin{tikzpicture}[>=stealth, scale=.4]
  \node[vertex] (x) at (0,1) {};
  \node[vertex] (y) at (2,1) {};
  \node[vertex] (z) at (0,3) {};
  \node[vertex] (w) at (2,3) {};
  \draw[->] (x)--(y);
  \draw[->] (x)--(z);
  \draw[->] (z)--(w);
  \draw[->] (w)--(y);
  \draw[->] (x)--(w);
  \node at (1,-.3) {$H_{10}$};
  \end{tikzpicture}
  \qquad
  \begin{tikzpicture}[>=stealth, scale=.4]
  \node[vertex] (x) at (0,1) {};
  \node[vertex] (y) at (2,1) {};
  \node[vertex] (z) at (0,3) {};
  \node[vertex] (w) at (2,3) {};
  \draw[->] (x)--(y);
  \draw[->] (x)--(z);
  \draw[->] (z)--(w);
  \draw[->] (w)--(y);
  \draw[->] (z)--(y);
  \node at (1,-.3) {$H_{11}$};
  \end{tikzpicture}
  \qquad
  \begin{tikzpicture}[>=stealth, scale=.4]
  \node[vertex] (x) at (0,1) {};
  \node[vertex] (y) at (2,1) {};
  \node[vertex] (z) at (0,3) {};
  \node[vertex] (w) at (2,3) {};
  \draw[->] (w)--(z);
  \draw[->] (z)--(w);
  \draw[->] (z)--(x);
  \draw[->] (w)--(y);
  \draw[->] (w)--(x);
  \draw[->] (x)--(y);
  \node at (1,-.3) {$H_{12}$};
  \end{tikzpicture}
  \qquad
  \begin{tikzpicture}[>=stealth, scale=.4]
  \node[vertex] (x) at (0,1) {};
  \node[vertex] (y) at (2,1) {};
  \node[vertex] (z) at (0,3) {};
  \node[vertex] (w) at (2,3) {};
  \draw[->] (w)--(z);
  \draw[->] (x)--(y);
  \draw[->] (x)--(z);
  \draw[->] (y)--(z);
  \draw[->] (y)--(w);
  \draw[->] (z)--(w);
  \node at (1,-.3) {$H_{13}$};
  \end{tikzpicture}
  \qquad
  \begin{tikzpicture}[>=stealth, scale=.4]
  \node[vertex] (x) at (0,1) {};
  \node[vertex] (y) at (2,1) {};
  \node[vertex] (z) at (0,3) {};
  \node[vertex] (w) at (2,3) {};
  \draw[->] (w)--(z);
  \draw[->] (x)--(y);
  \draw[->] (x)--(z);
  \draw[->] (x)--(w);
  \draw[->] (y)--(w);
  \draw[->] (z)--(w);
  \node at (1,-.3) {$H_{14}$};
  \end{tikzpicture}
  \end{center}

  \begin{center}
  \begin{tikzpicture}[>=stealth, scale=.4]
  \node[vertex] (x) at (0,1) {};
  \node[vertex] (y) at (2,1) {};
  \node[vertex] (z) at (0,3) {};
  \node[vertex] (w) at (2,3) {};
  \draw[->] (w)--(z);
  \draw[->] (z)--(w);
  \draw[->] (z)--(x);
  \draw[->] (w)--(y);
  \draw[->] (x)--(y);
  \draw[->] (z)--(y);
  \node at (1,-.3) {$H_{15}$};
  \end{tikzpicture}
  \qquad
  \begin{tikzpicture}[>=stealth, scale=.4]
  \node[vertex] (x) at (0,1) {};
  \node[vertex] (y) at (2,1) {};
  \node[vertex] (z) at (0,3) {};
  \node[vertex] (w) at (2,3) {};
  \draw[->] (z)--(w);
  \draw[->] (w)--(z);
  \draw[->] (z)--(x);
  \draw[->] (x)--(z);
  \draw[->] (x)--(y);
  \draw[->] (y)--(w);
  \draw[->] (x)--(w);
  \node at (1,-.3) {$H_{16}$};
  \end{tikzpicture}
  \qquad
  \begin{tikzpicture}[>=stealth, scale=.4]
  \node[vertex] (x) at (0,1) {};
  \node[vertex] (y) at (2,1) {};
  \node[vertex] (z) at (0,3) {};
  \node[vertex] (w) at (2,3) {};
  \draw[->] (z)--(w);
  \draw[->] (w)--(z);
  \draw[->] (x)--(y);
  \draw[->] (y)--(x);
  \draw[->] (z)--(x);
  \draw[->] (w)--(y);
  \draw[->] (w)--(x);
  \node at (1,-.3) {$H_{17}$};
  \end{tikzpicture}
  \qquad
  \begin{tikzpicture}[>=stealth, scale=.4]
  \node[vertex] (x) at (0,1) {};
  \node[vertex] (y) at (2,1) {};
  \node[vertex] (z) at (0,3) {};
  \node[vertex] (w) at (2,3) {};
  \draw[->] (x)--(z);
  \draw[->] (z)--(x);
  \draw[->] (x)--(y);
  \draw[->] (y)--(x);
  \draw[->] (w)--(y);
  \draw[->] (y)--(w);
  \draw[->] (x)--(w);
  \draw[->] (z)--(w);
  \node at (1,-.3) {$H_{18}$};
  \end{tikzpicture}
  \qquad
  \begin{tikzpicture}[>=stealth, scale=.4]
  \node[vertex] (x) at (0,1) {};
  \node[vertex] (y) at (2,1) {};
  \node[vertex] (z) at (0,3) {};
  \node[vertex] (w) at (2,3) {};
  \draw[->] (x)--(z);
  \draw[->] (z)--(x);
  \draw[->] (x)--(y);
  \draw[->] (y)--(x);
  \draw[->] (w)--(y);
  \draw[->] (y)--(w);
  \draw[->] (z)--(w);
  \draw[->] (z)--(y);
  \node at (1,-.3) {$H_{19}$};
  \end{tikzpicture}
  \qquad
  \begin{tikzpicture}[>=stealth, scale=.4]
  \node[vertex] (x) at (0,1) {};
  \node[vertex] (y) at (2,1) {};
  \node[vertex] (z) at (0,3) {};
  \node[vertex] (w) at (2,3) {};
  \draw[->] (w)--(z);
  \draw[->] (x)--(y);
  \draw[->] (x)--(z);
  \draw[->] (y)--(z);
  \draw[->] (y)--(w);
  \draw[->] (z)--(w);
  \draw[->] (z)--(y);
  \node at (1,-.3) {$H_{20}$};
  \end{tikzpicture}
  \qquad
  \begin{tikzpicture}[>=stealth, scale=.4]
  \node[vertex] (x) at (0,1) {};
  \node[vertex] (y) at (2,1) {};
  \node[vertex] (z) at (0,3) {};
  \node[vertex] (w) at (2,3) {};
  \draw[->] (w)--(z);
  \draw[->] (z)--(w);
  \draw[->] (z)--(x);
  \draw[->] (w)--(y);
  \draw[->] (w)--(x);
  \draw[->] (x)--(y);
  \draw[->] (x)--(w);
  \node at (1,-.3) {$H_{21}$};
  \end{tikzpicture}
  \end{center}

  \begin{center}
  \begin{tikzpicture}[>=stealth, scale=.4]
  \node[vertex] (x) at (0,1) {};
  \node[vertex] (y) at (2,1) {};
  \node[vertex] (z) at (0,3) {};
  \node[vertex] (w) at (2,3) {};
  \draw[->] (z)--(w);
  \draw[->] (w)--(z);
  \draw[->] (z)--(x);
  \draw[->] (x)--(z);
  \draw[->] (x)--(y);
  \draw[->] (y)--(w);
  \draw[->] (y)--(z);
  \draw[->] (z)--(y);
  \node at (1,-.3) {$H_{22}$};
  \end{tikzpicture}
  \qquad
  \begin{tikzpicture}[>=stealth, scale=.4]
  \node[vertex] (x) at (0,1) {};
  \node[vertex] (y) at (2,1) {};
  \node[vertex] (z) at (0,3) {};
  \node[vertex] (w) at (2,3) {};
  \draw[->] (z)--(x);
  \draw[->] (x)--(y);
  \draw[->] (y)--(x);
  \draw[->] (w)--(y);
  \draw[->] (x)--(w);
  \draw[->] (z)--(w);
  \draw[->] (w)--(z);
  \draw[->] (w)--(x);
  \node at (1,-.3) {$H_{23}$};
  \end{tikzpicture}
  \qquad
  \begin{tikzpicture}[>=stealth, scale=.4]
  \node[vertex] (x) at (0,1) {};
  \node[vertex] (y) at (2,1) {};
  \node[vertex] (z) at (0,3) {};
  \node[vertex] (w) at (2,3) {};
  \draw[->] (x)--(z);
  \draw[->] (x)--(y);
  \draw[->] (y)--(x);
  \draw[->] (w)--(y);
  \draw[->] (x)--(w);
  \draw[->] (z)--(w);
  \draw[->] (w)--(z);
  \draw[->] (w)--(x);
  \node at (1,-.3) {$H_{24}$};
  \end{tikzpicture}
  \qquad
  \begin{tikzpicture}[>=stealth, scale=.4]
  \node[vertex] (x) at (0,1) {};
  \node[vertex] (y) at (2,1) {};
  \node[vertex] (z) at (0,3) {};
  \node[vertex] (w) at (2,3) {};
  \draw[->] (x)--(z);
  \draw[->] (z)--(x);
  \draw[->] (x)--(y);
  \draw[->] (y)--(x);
  \draw[->] (w)--(y);
  \draw[->] (y)--(w);
  \draw[->] (x)--(w);
  \draw[->] (w)--(x);
  \draw[->] (z)--(w);
  \node at (1,-.3) {$H_{25}$};
  \end{tikzpicture}
  \qquad
  \begin{tikzpicture}[>=stealth, scale=.4]
  \node[vertex] (x) at (0,1) {};
  \node[vertex] (y) at (2,1) {};
  \node[vertex] (z) at (0,3) {};
  \node[vertex] (w) at (2,3) {};
  \draw[->] (x)--(z);
  \draw[->] (y)--(z);
  \draw[->] (z)--(x);
  \draw[->] (x)--(y);
  \draw[->] (y)--(x);
  \draw[->] (w)--(y);
  \draw[->] (y)--(w);
  \draw[->] (z)--(w);
  \draw[->] (z)--(y);
  \node at (1,-.3) {$H_{26}$};
  \end{tikzpicture}
  \qquad
  \begin{tikzpicture}[>=stealth, scale=.4]
  \node[vertex] (x) at (0,1) {};
  \node[vertex] (y) at (2,1) {};
  \node[vertex] (z) at (0,3) {};
  \node[vertex] (w) at (2,3) {};
  \draw[->] (x)--(z);
  \draw[->] (z)--(x);
  \draw[->] (x)--(y);
  \draw[->] (y)--(x);
  \draw[->] (w)--(y);
  \draw[->] (y)--(w);
  \draw[->] (x)--(w);
  \draw[->] (z)--(w);
  \draw[->] (z)--(y);
  \node at (1,-.3) {$H_{27}$};
  \end{tikzpicture}
  \end{center}
  \caption{Forbidden induced sub(di)graphs.}
  \label{fig:forbidden_DH}
\end{figure}

\section{Directed Graph Parameters on Twin-DH Digraphs}

In \cite{GKR21a} we presented algorithms to compute different directed width measures on (extended) directed co-graphs in linear time. Among these are directed path-width, directed tree-width, DAG-width and cycle rank (see \cite{GKR21a} for formal definitions). Those algorithms are not extendable directly to twin-dh digraphs, but by Lemma \ref{lem:twin_dh_co}, the results can be expanded to the latter.

We have stated the following Lemma in \cite{GR19} for directed path-width and directed tree-width. The proof is extendable straight-forward to DAG-width and cycle rank.

\begin{lemma} \label{lem:pw-sc}
The directed path-width (directed tree-width, DAG-width and cycle rank respectively) of a digraph $G$ is the maximum of the directed path-widths (directed tree-widths, DAG-widths and cycle ranks respectively) of all strong components of $G$.
\end{lemma}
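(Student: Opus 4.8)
The plan is to prove the identity in the cleanest case, directed path-width, and then indicate why the same scheme transfers to the remaining three parameters. Throughout I use the condensation of $G$: contracting every strong component to a single vertex yields a DAG, so the strong components admit a topological order $C_1, \ldots, C_m$ with the property that every arc joining two distinct components runs from a component of smaller index to one of larger index.

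For the lower bound $\dpw(G) \ge \max_i \dpw(C_i)$ I would use monotonicity of directed path-width under induced subdigraphs: given any directed path-decomposition $(W_1, \ldots, W_r)$ of $G$, the sequence $(W_1 \cap V(C_i), \ldots, W_r \cap V(C_i))$ is a directed path-decomposition of the induced subdigraph $C_i$ of no larger width. Hence $\dpw(C_i) \le \dpw(G)$ for every $i$, which gives the inequality.

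The substantive direction is the upper bound. First I would fix, for each $i$, an optimal directed path-decomposition $\mathcal{P}_i$ of $C_i$, and then concatenate them in the topological order to form a single sequence $\mathcal{P} = (\mathcal{P}_1, \mathcal{P}_2, \ldots, \mathcal{P}_m)$. The coverage and interval (consistency) conditions are immediate because the vertex sets $V(C_i)$ are pairwise disjoint and each occupies a contiguous block of indices in $\mathcal{P}$, so within each block the interval condition is inherited from $\mathcal{P}_i$, and no vertex spans two blocks. The only condition that uses the structure of $G$ is the arc condition. For an arc inside some $C_i$ it is already witnessed inside $\mathcal{P}_i$; for an arc $(u,v)$ between distinct components, topologicality forces $u \in C_a$, $v \in C_b$ with $a < b$, so every bag containing $u$ precedes every bag containing $v$ in $\mathcal{P}$, and the required pair of indices exists. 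Since the maximum bag size of $\mathcal{P}$ equals the maximum over the $\mathcal{P}_i$, this yields $\dpw(G) \le \max_i \dpw(C_i)$, and the two bounds give equality.

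For directed tree-width, DAG-width, and cycle rank the scheme is identical: monotonicity under induced subdigraphs gives the lower bound, and the component decompositions are assembled along the arborescence/DAG structure dictated by the condensation. Cycle rank is essentially immediate, since by its recursive definition the cycle rank of a digraph that is not strongly connected is already the maximum over its strong components. The main obstacle I anticipate is directed tree-width: its decomposition carries additional guarding and $Z$-normality conditions, so combining the component decompositions into one rooted arborescence requires checking that these side conditions are preserved when the component trees are attached in the order given by the condensation. The same ``arcs only run forward between components'' observation is what makes the guarding conditions go through, but verifying this is the most delicate part of the bookkeeping.
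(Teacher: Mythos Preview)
The paper does not actually prove this lemma; it cites \cite{GR19} for directed path-width and directed tree-width and simply asserts that the argument extends to DAG-width and cycle rank. Your proposal supplies precisely the standard argument one would expect that citation to contain: monotonicity under induced subdigraphs for the lower bound, and concatenation of optimal component decompositions along a topological order of the condensation for the upper bound. This is correct for directed path-width as you have written it, and your remark that cycle rank is immediate from its recursive definition is also right. Your caveat about directed tree-width is well placed but not fatal: since every arc between distinct strong components runs forward in the topological order, attaching the component arborescences in that order along a directed path preserves the $Z$-normality conditions, so the bookkeeping does go through. In short, your approach is correct and is in substance the argument the paper defers to its reference.
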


By this lemma, we can show that it is possible to bound the computation of the mentioned parameters on a twin-distance-hereditary digraph.

\begin{theorem}\label{main-theorem}
Let $G$ be a twin-distance-hereditary digraph, $n=|V(G)|$, $m=|E(G)|$. Then it holds that directed path-width ($\dpws$), directed tree-width ($\dtws$), DAG-width ($\dagws$), and cycle rank ($\cyr$) are computable in time $\mathcal{O}(n+m)$ and further
\begin{equation}\label{equ_parameter}
  \dpws(G)=\dtws(G)=\dagws(G)-1=\cyr(G).
\end{equation}
\end{theorem}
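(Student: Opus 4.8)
The plan is to derive Theorem~\ref{main-theorem} as a direct consequence of the two structural lemmas already proved together with the algorithmic results for directed co-graphs from \cite{GKR21a}. The guiding observation is that all four parameters depend only on the strong components of $G$ (Lemma~\ref{lem:pw-sc}) and that each such component is a directed co-graph (Lemma~\ref{lem:twin_dh_co}); hence both the running-time claim and the chain of equalities reduce to the corresponding statements for a single directed co-graph, even though $G$ itself need not be a directed co-graph.

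First I would compute the strong components of $G$ in time $\mathcal{O}(n+m)$ using a standard strongly-connected-components algorithm. By Lemma~\ref{lem:twin_dh_co} every strong component $H$ is a directed co-graph, so recognition succeeds and a di-co-tree of $H$ can be produced in time linear in $|V(H)|+|E(H)|$. Because the strong components partition the vertices of $G$ and have pairwise disjoint arc sets, the total cost of building all di-co-trees stays $\mathcal{O}(n+m)$. On each di-co-tree I then invoke the linear-time procedures of \cite{GKR21a} to evaluate $\dpw$, $\dtw$, $\dagw$, and $\cyr$ of $H$, which is again linear in total. Finally, by Lemma~\ref{lem:pw-sc} each parameter of $G$ equals the maximum of the respective parameter over the strong components, so one pass taking these maxima computes all four values of $G$ within $\mathcal{O}(n+m)$.

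For the equalities in~\eqref{equ_parameter} I would use that \cite{GKR21a} establishes $\dpw(H)=\dtw(H)=\dagw(H)-1=\cyr(H)$ for every directed co-graph $H$, and in particular for every strong component of $G$. Together with Lemma~\ref{lem:pw-sc} this gives $\dpw(G)=\max_H \dpw(H)=\max_H \dtw(H)=\dtw(G)$ and, since $\dpw(H)=\cyr(H)$, also $\dpw(G)=\cyr(G)$. The DAG-width identity needs the mild observation that the additive shift commutes with the maximum: from $\dagw(H)=\cyr(H)+1$ we obtain $\dagw(G)=\max_H\dagw(H)=\bigl(\max_H \cyr(H)\bigr)+1=\cyr(G)+1$, i.e.\ $\dagw(G)-1=\cyr(G)$.

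The main obstacle is organisational rather than conceptual: I must verify that recognising each strong component as a directed co-graph and constructing its di-co-tree is genuinely linear per component, and that this cost summed with the strong-component decomposition and the final maximisation does not exceed $\mathcal{O}(n+m)$. The only delicate algebraic point is the $-1$ appearing in the DAG-width equality, which is harmless precisely because subtracting a constant commutes with the maximum in Lemma~\ref{lem:pw-sc}.
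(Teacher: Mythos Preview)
Your proposal is correct and follows essentially the same route as the paper: decompose into strong components in linear time, apply Lemma~\ref{lem:twin_dh_co} to identify each component as a directed co-graph, invoke the linear-time results and the equalities from \cite{GKR21a} on each component, and then use Lemma~\ref{lem:pw-sc} to take the maximum. Your extra remarks about summing the per-component costs and about the additive $-1$ commuting with the maximum are a bit more explicit than the paper's write-up, but the argument is the same.
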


\begin{proof}
It is possible to get all strong components $C_1, \dots, C_r$ of $G$ in linear time. By Lemma \ref{lem:twin_dh_co}, all $C_i$, $1 \leq i \leq r$ are directed co-graphs. By \cite{GKR21a}, it is possible to get the directed path-width, directed tree-width, DAG-width and cycle rank of directed co-graphs in linear time and it holds that  $\dpws(C_i)=\dtws(C_i)=\dagws(C_i)-1=\cyr(C_i)$ for all $1 \leq i \leq r$. By Lemma \ref{lem:pw-sc}, the directed path width (directed tree-width, DAG-width and cycle rank respectively) of $G$ is the maximum of the directed path-widths (directed tree-widths, DAG-widths and cycle ranks respectively) over all $C_i$, $1 \leq i \leq r$.
It then follows that those parameters can be computed in linear time and that
$\dpws(G)=\dtws(G)=\dagws(G)-1=\cyr(G)$.
\end{proof}

Note that, as twin-dh digraphs are a subclass of extended directed co-graphs, the equality of these graph parameters follows directly from the results in \cite{GKR21a}. However, on extended directed co-graphs, there are only known algorithms to compute them in polynomial, not linear time.

\subsection{Directed Clique-width of Twin-DH Digraphs}

We now define the previously mentioned parameter directed clique-width. It differs from the formerly mentioned parameters as instead of representing the size of strong components in some way, it describes the number of different neighborhoods. Especially for bioriented cliques, the above mentioned parameters are infinitely large whereas directed clique-width is linear.

\begin{definition}[Directed clique-width \cite{CO00}]\label{Def_dcw}
The {\em directed clique-width} of a digraph $G$,
is the minimum number of labels that we need to define $G$ using the following operations:
\begin{enumerate}
\item Create a new vertex with label $a$ (denoted by $\bullet_a$).
\item Disjoint union of two labeled digraphs $G$ and $H$, denoted with $G\oplus H$.
\item Insert an arc from every vertex labeled with $a$ to every vertex labeled with $b$
 with $a\neq b$, which is denoted with $\alpha_{a,b}$.
\item Change label $a$ into label $b$, which is denoted by $\rho_{a\to b}$.
\end{enumerate}
An expression $X$ that is built using the operations defined above
with $k$ labels is a {\em directed clique-width $k$-expression}, we say {\em $k$-expression} for short. This structure is useful in solving many hard problems in polynomial time on digraphs with bounded directed clique-width, see e.g. \cite{GHKLOR14}.
\end{definition}

\begin{example}
A $3$-expression for the path on three vertices which is directed in one direction, namely $\overrightarrow{P_3}$ with vertices $a,b,c$, is
$$\alpha_{2,3}(\alpha_{1,2}(\bullet_1 \oplus \bullet_2 )\oplus \bullet_3 ).$$
\end{example}

Undirected clique-with was introduced by \cite{CO00} and is defined correspondingly. In the undirected case, co-graphs are exactly the graphs of clique-width at most $2$ and distance-hereditary graphs have clique-width at most $3$. This leads to the idea of regarding directed clique-width on twin-dh digraphs.

\begin{theorem}\label{the:cw3}
Every twin-dh digraph has directed clique-width at most $3$.
\end{theorem}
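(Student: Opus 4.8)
The plan is to show that the directed pruning sequence that defines a twin-dh digraph $G$ can be converted, step by step, into a directed clique-width $3$-expression. The key idea is to maintain an invariant on the labels: at every stage of the construction we keep the partially built digraph labeled so that exactly the vertex (or vertices) relevant to the next pruning operation carry a distinguished label, while all already-finished vertices carry a single ``inert'' label that will never again receive or emit an arc. With three labels at our disposal we can afford one inert label for the bulk of the digraph and two working labels to execute the next twin or pendant operation, relabeling back to the inert label once each step is complete.

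Concretely, I would process the sequence $S=(s_1,\dots,s_{n-1})$ in order and argue by induction on the length of the prefix. The induction hypothesis is that the digraph $G[\{v_0,\dots,v_i\}]$ admits a $3$-expression producing it with the anchor vertex $v_{a_{i+1}}$ of the next operation $s_{i+1}$ isolated under a dedicated working label, and every other vertex under the inert label. To add the new vertex $v_{i+1}$ I create it with the second working label, then insert the appropriate arcs between the new vertex and the anchor using the $\alpha_{a,b}$ operation according to the operation type: for a bioriented true twin ($\tdt$) I insert arcs in both directions between the two working labels; for a true out-twin ($\tot$) or true in-twin ($\tit$) I insert a single directed arc; for a false twin ($\ft$) no arc between them; and for the pendant operations $\ppv,\pmv$ the single arc to the anchor. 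Crucially, because $v_{i+1}$ is a twin of the anchor, it must acquire \emph{the same} arcs to the rest of the digraph that the anchor already has --- but the rest of the digraph is inert, so these shared arcs were already present and no further insertions toward the inert block are needed; the twin simply needs to duplicate the anchor's relationship, which is handled entirely by the arc(s) between the two working labels. After inserting the correct arcs, I relabel both working vertices back to the inert label, then split off a fresh working label for the next anchor via a relabel operation.

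The step I expect to be the main obstacle is verifying that three labels genuinely suffice for the relabeling bookkeeping, i.e.\ that I can always isolate the \emph{next} anchor under a private label while everything else stays inert. The difficulty is that a vertex may serve as an anchor several times in the sequence (as exploited in Lemma~\ref{lem:closed}), and between two consecutive operations the anchor of the upcoming step might currently be buried in the inert block. To isolate it I would need to relabel a single inert vertex back to a working label, but the $\rho_{a\to b}$ operation relabels \emph{all} vertices of a given label at once, so I cannot selectively extract one inert vertex. The honest fix is to reorganize the expression so that isolation is done \emph{before} a vertex is ever made inert: rather than relabeling the whole block, I keep each anchor under a working label until the last operation that uses it as an anchor has been performed, and only then merge it into the inert label. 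Since at any moment the current operation touches one new vertex and one anchor, and the inert block never needs to be touched again, two working labels plus one inert label suffice, so directed clique-width at most $3$ follows. I would verify this scheduling argument carefully, as it is the crux: the remaining arc-insertion casework for the six operation types is routine.
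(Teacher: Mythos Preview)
There is a genuine gap in your argument, and it appears in the sentence you flag as ``crucial.'' You write that when $v_{i+1}$ is a twin of the anchor $v_{a_{i+1}}$, ``these shared arcs were already present and no further insertions toward the inert block are needed.'' That is false: the shared arcs are present \emph{for the anchor}, not for the freshly created vertex $v_{i+1}=\bullet_2$, which has no arcs at all. To make $v_{i+1}$ a twin you must insert arcs between $v_{i+1}$ and every in- and out-neighbour of the anchor. But those neighbours currently sit in the inert block together with the anchor's non-neighbours, all under the same label, so the $\alpha$-operation cannot tell them apart. Your forward scheme therefore fails precisely at the twin steps --- the pendant steps are fine, since a pendant really does have only the single arc to the anchor.

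The scheduling fix you propose in the last paragraph (keep each anchor under a working label until its last use) does not repair this: even if the anchor itself is isolated, its \emph{neighbours} are still lost in the inert block, and it is those neighbours you need to reach when inserting the twin's arcs. The paper avoids the whole difficulty by traversing the pruning sequence \emph{backwards} and maintaining a separate $3$-expression $X_w$ for each vertex $w$, where the label-$1$ vertices in $X_w$ are exactly the ``far twins'' of $w$ --- all vertices that were created by a chain of twin operations from $w$. When step $s_i=(v,op,a)$ is processed, the far twins of $v$ (label $1$ in $X_v$) are temporarily relabeled $2$, the appropriate arcs between labels $1$ and $2$ are inserted, and then label $2$ is folded back to $1$ (for twin operations) or to $3$ (for pendant operations). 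Because every far twin of $v$ must receive exactly the same arcs to the far twins of $a$, a single $\alpha$-operation suffices; this is what makes three labels enough.
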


\begin{proof}
  We show a construction for a directed clique-width $3$-expression for every $G\in DDH$ and then argue, why this is best possible.
  The method we use is closely related to the undirected case: Distance-hereditary graphs have clique-width at most $3$, see \cite{GR00}.
  Let $G\in \DDH$ and $S = (s_1, \dots, s_n)$ be a directed pruning sequence creating $G$.
  We give an algorithm to construct a $3$-expression traversing $S$ starting with the last element of the sequence.
  The idea for computing this expression is to use three labels $1,2$ and $3$ as follows: After every step of the algorithm, expressions which are already constructed consist of vertices labeled with $1$ and $3$, where $1$ means that the vertex has not been finally treated and possibly has edges to other vertices not inserted yet, whereas for vertices labeled by $3$ all incident edges have already been considered. The label $2$ is only used as a working label.
  For initialization, let $X_v= \bullet_1$ for every vertex $v \in V$. As this is a $1$-expression, it is also a $3$-expression.
  In the following let now $s_i = (v_i, op_i, a_i)$, for simplification denoted by $(v, op, a)$, be the currently treated element of $S$ and $X_v$ and $X_a$ be the $3$-expressions which exists by induction for $v$ and $a$. Then, we get a $3$-expression by the following rules depending on the operation $op$.
  \begin{itemize}
    \item[(1)] $op =~\ppv : ~~ X_a := \rho_{2\to 3} ( \alpha_{2,1}( \rho_{1 \to 2} (X_v) \oplus X_a  ))$
    \item[(2)] $op =~\pmv : ~~ X_a :=  \rho_{2\to 3}(\alpha_{1,2}(\rho_{1 \to 2} (X_v) \oplus X_a))$
    \item[(3)] $op =~~\ft : ~~~  X_a := X_v \oplus X_a$
    \item[(4)] $op =~\tit : ~~  X_a := \rho_{2\to 1}(\alpha_{2,1}(\rho_{1 \to 2} (X_v) \oplus X_a))$
    \item[(5)] $op =~\tot : ~~  X_a := \rho_{2\to 1}(\alpha_{1,2}(\rho_{1 \to 2} (X_v) \oplus X_a))$
    \item[(6)] $op =~\tdt :~~  X_a := \rho_{2\to 1}(\alpha_{1,2}(\alpha_{2,1}(\rho_{1 \to 2} (X_v) \oplus X_a)))$
  \end{itemize}
To prove correctness we first need some definitions. For $w$ a vertex of $V(G)$, let $G(w)_i$ be the graph consisting of $w$ and every vertex that is generated by operations on $w$ after step $i$, which means that $G(w)_i$ is created by the pruning sequence $S(w)_i$ which contains elements $s_k = (v_k, op_k, v_{a_k})$ with $k\geq i$ and $v_{a_k}$ has been generated by a series of operations by $w$.

For $i = n$, this means that $G(w)_i = (\{w\}, \emptyset )$.
Note that for element $v_0$, which is the first anchor in the pruning sequence, i.e. $s_1 = (v_1, op_1, v_0)$, it holds that $S(v_0)_1 = S$ and $G(v_0)_1=G$.

We then show by induction that at any step $i$ with $n \geq i \geq 0$ of the algorithm, it holds that $X_w$ is a $3$-expression of $G(w)_i$ for all vertices $w \in V(G)$.
We further assume that every $X_w$ contains only vertices labeled by $1$ and $3$, where the vertices labeled by $1$ are exactly those, which are created by a series of twin operations on $w$ (including $w$ itself). To simplify, we call such a vertex a \emph{far twin} in the following.

After the initialization, it is easy to see for $i=n$ that for all $w \in V(G)$, $X_w= \bullet_1$ is a $3$-expression of $G(w)_i = (\{w\}, \emptyset )$. Obviously, the only vertex in  $G(w)_i$ is $w$ which is labeled by $1$.

Consider now step $i$. By induction we know that $X_a$ is a $3$-expression of $G(a)_{i+1}$ where far twins of $a\in V(G)$ are labeled by $1$ and all other vertices are labeled by $3$. Further, $X_v$ is a $3$-expression of $G(v)_{i+1}$ where all far twins of $v\in V(G)$ are labeled by $1$ and all other vertices are labeled by $3$. We now show that after step $i$ it holds that $X_a$ is a $3$-expression of $G(a)_i$. Therefore, we consider element $s_i=:(v, op, a)$ in step $i$.

\begin{itemize}
\item[(1)] As $v$ is a pendant plus vertex of $a$, there exist edges from every far twin of $v$ to every far twin of $a$.
By $\rho_{1 \to 2} (X_v)$ we relabel every vertex in $X_v$ which is labeled by $1$, i.e. every far twin of $v$ with $2$.
We join this expression with $X_a$ and add edges from all labels $2$ to $1$, which inserts all edges created by the pendant plus relation of $v$ to $a$.
As $v$ is a pendant plus vertex of $a$, all far twins of $a$ can not be far twins of $v$ and thus, we relabel these vertices from $2$ to $3$.
Now, the newly created $X_a$ is a $3$-expression of $G(a)_i$ consisting only of labels $1$ and $3$, where the vertices labeled by $1$ are exactly the far twins of $a$.
\item[(2)] Analogously to (1).
\item[(3)] As $v$ is a false twin of $a$, no new edges are inserted by this operation and further all far twins of $v$ are also far twins of $a$.
Therefore, we only need to join expressions $X_v$ and $X_a$ to create an expression for $G(a)_i$ where every far twin of $a$ is labeled by $1$ and every other vertex is labeled by $3$.
\item[(4)] As $v$ is a true in-twin of $a$, like in (1), there exist edges from every far twin of $v$ to every far twin of $a$.
We therefore use the same method to join the expressions $X_v$ and $X_a$ and create edges between them.
But unlike in (1), every far twin of $v$ is a far twin of $a$. Therefore, we relabel the far twins of $v$ from $2$ to $1$, to obtain a $3$-expression for $G(a)_i$ in which exactly all far twins of $a$ are labeled by $1$.
\item[(5)] Analogously to (4).
\item[(6)] Is very similar to (4) and (5), with the only difference that we need edges from every far twin of $v$ to every far twin of $a$ and the other way round.
Therefore, we insert edges from labels $2$ to $1$ and from labels $1$ to $2$, before relabeling, to obtain a $3$-expression for $G(a)_i$ in which exactly all far twins of $a$ are labeled by $1$.
\end{itemize}

Further, the directed clique-width of a twin-dh digraph has to be at least $3$, as can be seen by the following counterexample: The $\overrightarrow{P_3}$, which means a directed path of $3$ vertices, is twin-distance-hereditary, but it is not expressible by a $2$-expression.
\end{proof}

By Courcelles Theorem on clique-width, the bounded directed clique-width of these digraphs leads to the computability of every problem, which is describable in monadic second order logic.

\begin{corollary}
Let $G$ be a twin-dh digraph. Then every graph problem, which is describable in $\mso_1$ logic, is computable in polynomial time on $G$.
\end{corollary}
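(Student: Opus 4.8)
The plan is to combine the bound on directed clique-width from Theorem~\ref{the:cw3} with a Courcelle-type meta-theorem for directed clique-width. First I would recall that, by Theorem~\ref{the:cw3}, every twin-dh digraph $G$ has directed clique-width at most $3$. The decisive point for an \emph{algorithmic} rather than merely existential statement is that we can produce a witnessing $3$-expression efficiently: by the earlier proposition that a directed pruning sequence of $G$ is computable in polynomial time, together with the construction in the proof of Theorem~\ref{the:cw3}, which traverses such a sequence from its last element backwards and, in a single pass, applies the rules (1)--(6) to assemble a directed clique-width $3$-expression. Since each of the $n-1$ steps produces a constant-size piece of the expression, the whole $3$-expression is obtained in polynomial time.

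Next I would invoke the directed analogue of Courcelle's theorem for clique-width: for every fixed $k$ and every formula $\varphi$ of $\mso_1$ logic, i.e.\ monadic second-order logic with quantification over vertices and over vertex sets, there is an algorithm that, given a digraph together with a directed clique-width $k$-expression, decides $\varphi$ in time linear in the size of the expression (see~\cite{CO00, GHKLOR14}). Applying this with $k=3$ and the $3$-expression computed above, every $\mso_1$-describable problem is decided in polynomial time on $G$, which is exactly the claim.

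The step that requires the most care is the explicit, polynomial-time computation of the $3$-expression rather than relying on a generic clique-width approximation. Because the meta-theorem is an fpt-result parameterised by directed clique-width and assumes an expression is provided, one must guarantee that such an expression is genuinely at hand; this is precisely what the constructive proof of Theorem~\ref{the:cw3} delivers, so no separate search for an optimal expression is needed. A second subtlety worth stating explicitly is that the logic is restricted to $\mso_1$: quantification is allowed over vertices and vertex sets but not over arc sets, since a bound on clique-width does not in general control $\mso_2$ properties. With these two points settled, the corollary follows immediately.
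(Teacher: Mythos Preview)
Your argument is correct and follows exactly the route the paper takes: the corollary is stated immediately after Theorem~\ref{the:cw3} and is justified there simply by invoking Courcelle's theorem on clique-width, without a formal proof. If anything, you are more careful than the paper, since you spell out the polynomial-time constructibility of the $3$-expression from the pruning sequence and the restriction to $\mso_1$; these points are implicit but not made explicit in the paper.
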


\section{Further Problems on directed twin-dh graphs}

From the result in \cite{GKR21} we can follow, that the dichromatic number problem can be solved in polynomial time on twin-dh digraphs.

By \cite{GWY16} we can solve the problems Directed Hamiltonian Path, Directed Hamiltonian Cycle, Directed Cut, and Regular Subdigraph using an XP-algorithm w.r.t. the parameter directed NLC-width in polynomial time.
Directed NLC-width is a digraph parameter which is closely related to directed clique-width, since we can transform every directed clique-width $k$-expression into an equivalent NLC-width $k$-expression, see \cite{GWY16}.
Thus, directed twin-dh graphs have bounded NLC-width and we can solve the above mentioned problems in polynomial time on this class.

\section{Conclusion and Outlook}

In this paper, we introduce directed twin-distance-hereditary graphs, which are developed by a generalization of the recursive definition for undirected distance-hereditary graphs.
The class of twin-dh digraphs is a superclass of directed co-graphs, defined in \cite{CP06} and when excluding the bioriented true twin operation, it is a subclass of distance-hereditary digraphs, defined in \cite{LR10}.
We characterize this class by forbidden induced subdigraphs.
Further, we show that the class is a subclass of extended directed co-graphs which allows us to deduce interesting results.
However, due to the unbounded directed clique-width of extended directed co-graphs, twin-dh digraphs exhibit properties which allow supplemental results such that an investigation of this class is advisable.
We show interesting properties of the class such as that every strong components is a directed co-graph.
This is helpful in the computation of several problems.
For future work it might be interesting to investigate if problems which are solvable on undirected distance-hereditary graphs, as e.g. the (directed) Steiner tree problem can also be solved on twin-dh digraphs.

Moreover, we show that several directed width parameters, namely directed path-width, directed tree-width, DAG-width and cycle rank can be computed in linear time on directed twin-distance-hereditary graphs. From the associated proof (as well as from the fact that this twin-dh digraphs are a subclass of extended directed co-graphs) further the equality of all these parameters follows.

Further, we can conclude by our results and \cite{GKR21a} that for twin-dh digraphs, as for directed co-graphs, Kelly-width can be bounded by DAG-width.
Due to \cite[Conjecture 30]{HK08}, \cite{AKKRS15}, and
\cite[Section 9.2.5]{BG18} this remains open for general digraphs and is related to one of the
biggest open problems in graph searching, namely whether the monotonicity
costs for Kelly- and DAG-width games are bounded.  In our previous paper we could show, that
the equivalence of those two parameters is given on directed co-graphs. In this paper we can
extend this result to their superclass of twin-dh digraphs.

Like in the undirected case, every twin-dh digraph has directed clique-width at most $3$, though not every digraph of directed clique-width $3$ is a twin-dh digraph.
From that we can conclude  several interesting results, since there are many NP-hard problems which are solvable on digraphs of bounded directed clique-width.

It would be interesting for future work to consider other superclasses of twin-dh digraphs and whether it is still possible to find efficient algorithms to compute several graph parameters on these classes and at which point it becomes NP-hard.

\paragraph{\textbf{Acknowledgements}}
We thank Frank Gurski and Van Bang Le for the very useful discussions. \\
The work of both authors was supported by the German Research Association (DFG)
grant GU 970/7-1. \\

\bibliographystyle{plain}

\begin{thebibliography}{10}

\bibitem{AKK17}
I.~Adler, M.M. Kant{\'e}, and O.~Kwon.
\newblock Linear rank-width of distance-hereditary graphs i. a polynomial-time
  algorithm.
\newblock {\em Algorithmica}, 78(1):342--377, 2017.

\bibitem{AKKRS15}
S.A. Amiri, L.~Kaiser, S.~Kreutzer, R.~Rabinovich, and S.~Siebertz.
\newblock {Graph Searching Games and Width Measures for Directed Graphs}.
\newblock In {\em 32nd International Symposium on Theoretical Aspects of
  Computer Science (STACS)}, volume~30 of {\em Leibniz International
  Proceedings in Informatics (LIPIcs)}, pages 34--47. Schloss
  Dagstuhl--Leibniz-Zentrum fuer Informatik, 2015.

\bibitem{BM86}
H.-J. Bandelt and H.M. Mulder.
\newblock Distance-hereditary graphs.
\newblock {\em Journal of Combinatorial Theory, Series B}, 41:182--208, 1986.

\bibitem{BG09}
J.~Bang-Jensen and G.~Gutin.
\newblock {\em Digraphs. {T}heory, {A}lgorithms and {A}pplications}.
\newblock Springer-Verlag, Berlin, 2009.

\bibitem{BG18}
J.~Bang-Jensen and G.~Gutin, editors.
\newblock {\em Classes of Directed Graphs}.
\newblock Springer-Verlag, Berlin, 2018.

\bibitem{BDK00}
H.J. Broersma, E.~Dahlhaus, and T.~Kloks.
\newblock A linear time algorithm for minimum fill-in and treewidth for
  distance hereditary graphs.
\newblock {\em Discrete Applied Mathematics}, 99(1-3):367--400, 2000.

\bibitem{CO00}
B.~Courcelle and S.~Olariu.
\newblock Upper bounds to the clique width of graphs.
\newblock {\em Discrete Applied Mathematics}, 101:77--114, 2000.

\bibitem{CP06}
C.~Crespelle and C.~Paul.
\newblock Fully dynamic recognition algorithm and certificate for directed
  cographs.
\newblock {\em Discrete Applied Mathematics}, 154(12):1722--1741, 2006.

\bibitem{FoucaudHP19}
F.~Foucaud, S.~Heydarshahi, and A.~Parreau.
\newblock Domination and location in twin-free digraphs.
\newblock {\em CoRR}, abs/1910.05311, 2019.

\bibitem{GHKLOR14}
R.~Ganian, P.~Hlinen{\'y}, J.~Kneis, A.~Langer, J.~Obdrz{\'a}lek, and
  P.~Rossmanith.
\newblock Digraph width measures in parameterized algorithmics.
\newblock {\em Discrete Applied Mathematics}, 168:88--107, 2014.

\bibitem{GR99}
M.C. Golumbic and U.~Rotics.
\newblock On the clique-width of some perfect graph classes.
\newblock In {\em Proceedings of Graph-Theoretical Concepts in Computer Science
  (WG)}, volume 1665 of {\em LNCS}, pages 135--147. Springer-Verlag, 1999.

\bibitem{GR00}
M.C. Golumbic and U.~Rotics.
\newblock On the clique-width of some perfect graph classes.
\newblock {\em International Journal of Foundations of Computer Science},
  11(3):423--443, 2000.

\bibitem{GKR21}
F.~Gurski, D.~Komander, and C.~Rehs.
\newblock Acyclic coloring parameterized by directed clique-width.
\newblock In {\em Proceedings of the International Conference on Algorithms and
  Discrete Applied Mathematics (CALDAM)}, volume 12601 of {\em LNCS}, pages
  95--108. Springer-Verlag, 2021.

\bibitem{GKR21a}
F.~Gurski, D.~Komander, and C.~Rehs.
\newblock How to compute digraph width measures on directed co-graphs.
\newblock {\em Theoretical Computer Science}, 855:161--185, 2021.

\bibitem{GR19}
F.~Gurski and C.~Rehs.
\newblock Forbidden directed minors, directed path-width and directed
  tree-width of tree-like digraphs.
\newblock In {\em Proceedings of the Conference on Current Trends in Theory and
  Practice of Computer Science (SOFSEM)}, volume 11376 of {\em LNCS}, pages
  234--246. Springer-Verlag, 2019.

\bibitem{GWY16}
F.~Gurski, E.~Wanke, and E.~Yilmaz.
\newblock Directed {NLC}-width.
\newblock {\em Theoretical Computer Science}, 616:1--17, 2016.

\bibitem{GutinY02}
G.~Z. Gutin and A.~Yeo.
\newblock Orientations of digraphs almost preserving diameter.
\newblock {\em Discrete Applied Mathematics}, 121(1-3):129--138, 2002.

\bibitem{How77}
E.~Howorka.
\newblock A characterization of distance-hereditary graphs.
\newblock {\em The Quarterly Journal of Mathematics Ser. 2}, 28:417--420, 1977.

\bibitem{HK08}
P.~Hunter and S.~Kreutzer.
\newblock Digraph measures: Kelly decompositions, games, and orderings.
\newblock {\em Theoretical Computer Science}, 399(3):206--219, 2008.

\bibitem{KR09}
M.~Kant{\'{e}} and M.~Rao.
\newblock Directed rank-width and displit decomposition.
\newblock In {\em Proceedings of Graph-Theoretical Concepts in Computer Science
  (WG)}, volume 5911 of {\em LNCS}, pages 214--225. Springer-Verlag, 2009.

\bibitem{KBMK93}
T.~Kloks, H.~Bodlaender, H.~M{\"u}ller, and D.~Kratsch.
\newblock Computing treewidth and minimum fill-in: {A}ll you need are the
  minimal separators.
\newblock In {\em Proceedings of the Annual European Symposium on Algorithms
  (ESA)}, volume 726 of {\em LNCS}, pages 260--271. Springer-Verlag, 1993.

\bibitem{LR10}
M.~L{\"a}tsch and R.~Schrader.
\newblock Distance-hereditary digraphs.
\newblock {\em Journal of Discrete Algorithms}, 8(2):231--240, 2010.

\bibitem{Ler71}
H.~Lerchs.
\newblock On cliques and kernels.
\newblock Technical report, Dept. of Comput. Sci, Univ. of Toronto, 1971.

\bibitem{Oum05a}
S.~Oum.
\newblock Rank-width and vertex-minors.
\newblock {\em Journal of Combinatorial Theory, Series B}, 95:79--100, 2005.

\bibitem{Schrader}
R.~Schrader.
\newblock Personal communication, 2021.

\end{thebibliography}

\end{document}